\newcommand{\reals}{\mathbb{R}}
\newcommand{\BigO}{O}
\newcommand{\BigTheta}{\Theta}
\newcommand{\BigOmega}{\Omega}
\newcommand{\polylog}{\operatorname{polylog}}
\newcommand{\bigmid}{\mathrel{\big|}}
\newcommand{\Scan}{\operatorname{Scan}}
\newcommand{\Sort}{\operatorname{Sort}}
\newcommand{\vcdim}{\operatorname{dim}_{\text{VC}}}
\newcommand{\varclassifiers}{\mathcal H}
\newcommand{\varclassifier}{\ensuremath{h}}
\newcommand{\varseptree}{\ensuremath{H}}
\newcommand{\varcircles}{\mathcal C}
\newcommand{\vardisks}{\mathcal D}
\newcommand{\vargroundset}{\mathcal X}
\newcommand{\varvcdim}{\xi}
\newcommand{\varsystem}{\Gamma}
\newcommand{\varsamplesystem}{\Upsilon}
\newcommand{\terrain}{\Sigma}
\newcommand{\vargraph}{\mathcal G}
\newcommand{\varregion}{R}
\newcommand{\varraindist}{\mathcal R}
\newcommand{\varvertices}{\mathbb V}
\newcommand{\gb}{\operatorname{GB}}
\newcommand{\tb}{\operatorname{TB}}
\newcommand{\km}{\operatorname{km}}
\newcommand{\kmsquared}{\operatorname{km}^2}
\newcommand\numberthis{\addtocounter{equation}{1}\tag{\theequation}}
\newcommand{\rfig}[1]{Figure~\ref{fig:#1}}
\newcommand{\rsec}[1]{Section~\ref{sec:#1}}
\newcommand{\rapd}[1]{Appendix~\ref{sec:#1}}
\newcommand{\rlem}[1]{Lemma~\ref{lem:#1}}
\newcommand{\rthm}[1]{Theorem~\ref{thm:#1}}
\newcommand{\rtab}[1]{Table~\ref{tab:#1}}
\newcommand{\reqn}[1]{(\ref{eqn:#1})}
\newcommand{\etal}{\textit{et al}.}
\theoremstyle{definition}
\newtheorem{theorem}{Theorem}[section]
\newtheorem{lemma}[theorem]{Lemma}
\newtheorem*{definition}{Definition}
\title{Practical I/O-Efficient Multiway Separators}
\author{Svend C. Svendsen\thanks{Aarhus University.  Email: \texttt{svendcs@cs.au.dk}}}
\begin{document}

\maketitle

\begin{abstract}
We revisit the fundamental problem of I/O-efficiently computing $r$-way separators on planar graphs.
An $r$-way separator divides a planar graph with $N$ vertices into $\BigO(r)$ regions of size
$\BigO(N/r)$ and $\BigO(\sqrt {Nr})$ boundary vertices in total, where boundary vertices are vertices that are adjacent to more than one region.
Such separators are used in I/O-efficient solutions to many fundamental problems on planar graphs such as breadth-first search, finding single-source shortest paths, topological sorting, and finding strongly connected components. %~\cite{arge2003topo}~\cite{arge2003components}
Our main result is an I/O-efficient sampling-based algorithm that, given a Koebe-embedding of a graph with $N$
vertices and a parameter $r$, computes an $r$-way separator for the graph under certain assumptions on the size of internal memory.
Computing a Koebe-embedding of a planar graph is difficult in practice and no known
I/O-efficient algorithm currently exists.
Therefore, we show how our algorithm can be generalized and applied directly to Delaunay triangulations without relying on a Koebe-embedding.
This adaptation can produce many boundary vertices in the worst-case, however, to our knowledge our result is the first to be implemented in practice due to the many non-trivial and complex techniques used in previous results. %~\cite{maheshwari2008separator,arge2013separator}
Furthermore, we show that our algorithm performs well on real-world data and that the number of boundary vertices is small in practice.

Motivated by applications in geometric information systems, we show how our
algorithm for Delaunay triangulations can be applied to
compute the flow accumulation over a terrain, which models how much water flows over the vertices of a terrain.
When given an $r$-way separator, our implementation of the algorithm outperforms traditional sweep-line-based algorithms on the publicly available digital elevation model of Denmark. %~\cite{sdfe2021dem}
%
% Fundamental problem
% I/O-model
% * The main technical constribution is a multiway version of separators , based on circle packing
% compute using sort (N)
% generalize and apply to Delaunay triangulations
% Flow accumulation
\end{abstract}

% \renewcommand{\abstractname}{Acknowledgements}
% \begin{abstract}
\section*{Acknowledgements}
  We thank Pankaj K. Agarwal and Lars Arge for useful discussions on computing planar separators for Delaunay triangulations.
  We also thank Gerth S. Brodal for his input on this paper.
% \end{abstract}

\section{Introduction}
\label{sec:intro}
% DESCIRBING PLANAR SEPARATORS
In this paper, we revisit the fundamental problem of computing $r$-way separators by presenting I/O-efficient algorithms and demonstrating how our results can be applied to I/O-efficiently compute flow accumulation on a terrain.
We implement and evaluate our algorithms on real-world terrain data.

The $r$-way separator is a generalization of the \emph{planar separator theorem} by Lipton \etal~\cite{lipton1979separator}.
The planar separator theorem states that a planar graph with $N$ vertices can be partitioned into two unconnected sets each of size at most $(2/3)N$ by removing $\BigO \big(\sqrt N \big)$ vertices from the graph.
Lipton and Tarjan~\cite{lipton1979separator} showed that such a partitioning can be computed in linear time in classical models of computation.
Frederickson \etal~\cite{frederickson1987separator} described how the planar separator theorem can be generalized to the concept of an \emph{$r$-way separator}:
Given a parameter $r$, an $r$-way separator is a division of the vertices of the graph into $\BigO(r)$ non-disjoint \emph{regions} such that each vertex of the graph is contained in at least one region.
A region contains two types of vertices: boundary vertices and interior vertices.
An \emph{interior vertex} is contained in exactly one region and is adjacent only to vertices in that region.
A \emph{boundary vertex} is shared among at least two regions and is adjacent to vertices in multiple regions.
Each region contains $\BigO(N/r)$ vertices in total of which $\BigO(\sqrt{N/r})$ are boundary vertices.
It follows that the total number of boundary vertices is $\BigO(\sqrt{Nr})$.

% I/O-Efficient results
The concept of $r$-way separators is particularly interesting when handling planar graphs that exceed the capacity of the main memory since the computation of such separators can be used to divide the graph into memory-sized regions.
In this situation, data is written and read in large blocks to disk, so it is important to design algorithms that minimize the movement of such blocks.
This has led to the development of the so-called \emph{I/O model} by Aggarwal and Vitter~\cite{aggarwal88complexity}.
In this model, the computer is equipped with a two-level memory hierarchy consisting of an \emph{internal memory} capable of holding $M$ data items and an \emph{external memory} of unlimited size.
All computation has to happen on data in internal memory.
Data is transferred between internal and external memory in blocks of $B$ consecutive data items.
Such a transfer is referred to as an \emph{I/O-operation} or an \emph{I/O}.
The cost of an algorithm is the number of I/Os it performs.
The number of I/Os required to read $N$ consecutive items from disk is
$\Scan(N) = \BigO(N/B)$ and the number of I/Os required to sort $N$ items is
$\Sort(N) = \BigTheta \big((N/B) \log_{M/B}(N/B)\big)$~\cite{aggarwal88complexity}.
For all realistic values of $N$, $M$ and $B$ we have $\Scan(N) < \Sort(N) \ll N$.
Maheshwari \etal~\cite{maheshwari2008separator} showed that an $r$-way separator can be computed in $\BigO(\Sort(N))$ I/Os.
This algorithm results in solutions to fundamental graph problems, such as breadth-first search, finding single-source shortest paths, topological sorting, and finding strongly connected components, that uses $\BigO(\Sort(N))$ I/Os~\cite{arge2003topo,arge2003components}.
Later, Arge \etal~\cite{arge2013separator} presented an I/O-efficient $r$-way separator algorithm that uses $\BigO(\Sort(N))$ I/Os and $\BigO(N \log N)$ internal memory computation time.
Furthermore, they showed that this result can be used to derive algorithms for finding single-source shortest paths, topological sorting, and finding strongly connected components using $\BigO(\Sort(N))$ I/Os and $\BigO(N \log N)$ internal memory computation time.
This improves upon the result by Maheshwari \etal~by upper bounding the internal memory computation time used.

% Circle Packings
To our knowledge, no algorithms for I/O-efficiently computing multiway separators have been implemented in practice yet due to the many non-trivial and complex techniques used to derive them.
Therefore, we consider the problem of computing $r$-way planar separators when given a \emph{Koebe-embedding} of the graph.
% Furthermore, the number of boundary vertices in each region is $\BigO(\sqrt{|\vargraph|/r})$ under certain assumptions on the size of $M$.
A \emph{Koebe-embedding} of a planar graph is a set of disks in the plane with disjoint interiors where the center of each disk corresponds to a vertex in the graph and two disks are tangent if and only if the corresponding vertices in the graph are adjacent.
Miller \etal~\cite{miller1997separator} showed that a Koebe-embedding can be used to partition the corresponding graph into two unconnected parts each of size at most $(3/4)N$ by removing $\BigO\big(\sqrt N \big)$ vertices from the graph.
In this paper, we present a simple I/O-efficient algorithm that computes an $r$-way separator for a  planar graph when given a \emph{Koebe-embedding} of the graph and having certain assumptions on the size of internal memory.

% Delaunay
% --------
To our knowledge, the computation of Koebe-embeddings is not trivial and no I/O-efficient algorithms have been presented.
Bannister \etal~\cite{bannister2014complexity} showed that computing exact Koebe-embedding requires computing the roots of polynomials of unbounded degree.
Thus, the focus of the current state-of-the-art algorithms is to numerically approximate the Koebe-embedding.
Orick \etal~\cite{orick2017circlepacking} presented an algorithm that approximates a Koebe-embedding by alternating between adjusting radii and positions of vertices.
Empirical results show that the algorithm runs in approximately linear time, however, no theoretical worst-case bounds are given.
Recently, Dong \etal~\cite{dong2020circlepacking} presented an algorithm based on convex optimization that computes an approximate Koebe-embedding in near-linear worst-case time.
To our knowledge, these algorithms do not trivially extend to the I/O model.

Motivated by applications in geometric information systems, we show that our algorithm can be adapted to Delaunay triangulation without having to first compute a Koebe-embedding.
Delaunay triangulations can be computed using $\BigO(\Sort(N))$ I/Os~\cite{agarwal2005delaunay} and are widely used to convert terrain point clouds into so-called triangulated irregular networks which represent a terrain as a triangulated surface.
A \emph{triangulated irregular network} (\emph{TIN}) is computed by projecting the terrain point cloud in $\reals^3$ onto the $xy$-plane, computing the Delaunay triangulation of the projected points, and lifting the Delaunay triangulation back to $\reals^3$.
This adaptation can result in $\BigOmega(N)$ boundary vertices in the worst case~\cite{miller1995delaunay}.
% However, we show that the number of boundary vertices remains small when the largest ratio of the circum-radius to the length of the smallest edge over all triangles is bounded.
However, we test our algorithm on the publicly available digital elevation model of Denmark~\cite{sdfe2021dem} and show that the algorithm results in a small number of boundary vertices in practice.

% Experiments
Finally, we describe how $r$-way separators can be used to compute the \emph{flow accumulation} over a terrain, which models the flow of water over a terrain represented as a TIN.
We consider a variant of the flow accumulation problem, where we are given a rain distribution function $\varraindist$ that fits in internal memory and assigns $\varraindist(v) \geq 0$ units of water to each vertex $v$.
The water in each vertex $v$ is then distributed by pushing water to a neighboring vertex according to a given flow direction of $v$.
The flow accumulation of a vertex $v$ is the total amount of water that flows through $v$.
This problem is traditionally solved in the I/O-model using $\BigO(\Sort(N))$ I/Os by a sweep-line algorithm where the flow is propagated using a priority queue during a downward sweep of the terrain~\cite{haverkort2012grid}.
In this paper, we adapt the grid terrain algorithm by Haverkort \etal~\cite{haverkort2012grid} to speed up the computation of flow accumulation over a TIN when given an $r$-way separator of the terrain.
Furthermore, we show that this algorithm performs well in practice and outperforms the traditional sweep-line algorithm on the digital elevation model of Denmark when given an $r$-way separator of the terrain.

\section{Preliminaries}
\label{sec:preliminaries}
In this section, we state several preliminary definitions and introduce a
more general definition of the $r$-way separator.

\subsection{k-ply Neighborhood Systems} % From miller 1997
We begin by presenting a more formal definition of a Koebe-embedding and then
introduce the more general $k$-ply neighborhood system.
This generalization will be used later when applying our result to Delaunay
triangulations.
The definitions in this section follow Miller \etal~\cite{miller1997separator}.
Let a \emph{disk packing} be a set of disks $\{B_1, \ldots, B_N\}$ in the plane that have disjoint interiors.
Koebe~\cite{koebe1936embedding} showed that every planar graph can be embedded as a disk packing such that the center of each disk corresponds to a vertex in the planar graph and two disks are tangent if and only if there is an edge connecting the two corresponding vertices in the graph.
We refer to this as a \emph{Koebe-embedding} of the planar graph.
% Furthermore, Koebe showed that every triangulated planar graph has a circle packing graph which is unique up to Möbius transformations:
% \begin{theorem}
%   Every triangulated planar graph $G$ is isomorphic to a circle packing graph.
% \end{theorem}
Note that a partitioning of a Koebe-embedding into disjoint subsets implies a partitioning of the vertices of the graph.
Miller \etal~\cite{miller1997separator} used this idea to describe how a Koebe-embedding of a planar graph can be used to compute a planar separator.
In order to describe this result, we first introduce the more general $k$-ply neighborhood system:
\begin{definition}[$k$-ply neighborhood system]
  A $k$-ply neighborhood system in $d$ dimensions is a set $\varsystem = \{ B_1, \ldots, B_n
  \}$ of closed balls in $\reals^d$ such that no point in $\reals^d$ is
  in the interior of more than $k$ of the balls.
\end{definition}

In the following sections, we introduce the notion of a separator for $k$-ply neighborhood systems in $\reals^d$ for general $k$.
Observe that a disk packing is a $1$-ply neighborhood system and, thus, this separator will also be applicable to Koebe-embeddings.

We now state the planar separator result by Miller \etal~\cite{miller1997separator}.
A $d$-dimensional sphere $\varclassifier$ partitions a $k$-neighborhood system $\varsystem$ in $\reals^d$ into three subsets:
the set $\varsystem(\varclassifier_{>})$ of all balls of $\varsystem$ contained in the exterior of $\varclassifier$, the set $\varsystem(\varclassifier_{<})$ of all balls of $\varsystem$ contained in the interior of $\varclassifier$, and the set $\varsystem(\varclassifier_{=})$ of all balls of $\varsystem$ that intersect
the boundary of $\varclassifier$.
Correspondingly, we define the subsets $\varsystem(\varclassifier_{\leq}) = \varsystem(\varclassifier_{<}) \cup \varsystem(\varclassifier_{=})$ and $\varsystem(\varclassifier_{\geq}) = \varsystem(\varclassifier_{>}) \cup \varsystem(\varclassifier_{=})$.
\begin{theorem}[Sphere Separator~\cite{miller1997separator}]
  Suppose $\varsystem$ is a $k$-ply neighborhood system in $\reals^d$ with size $|\varsystem|$. Then there exists a sphere $\varclassifier$ in $\reals^d$ such that
  \begin{align*}
    |\varsystem(\varclassifier_<)|, |\varsystem(\varclassifier_>)| &\leq \frac{d+1}{d+2} \cdot |\varsystem|\;,\\
    |\varsystem(\varclassifier_=)| &= \BigO \big(k^{1/d} \cdot |\varsystem|^{1 - 1/d}\big)\;.
  \end{align*}
\end{theorem}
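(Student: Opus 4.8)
The plan is to prove the Sphere Separator theorem using the classical stereographic-projection-and-centerpoint argument of Miller \etal. The key idea is to reduce the problem of finding a good separating sphere in $\reals^d$ to finding a well-balanced separating hyperplane in $\reals^{d+1}$, where balance can be guaranteed by a centerpoint and where a random great sphere intersects few balls in expectation.

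First I would lift the configuration to the unit sphere $S^d \subset \reals^{d+1}$ via the inverse stereographic projection $\Pi^{-1}$. Each ball center in $\reals^d$ maps to a point on $S^d$, giving a set of $|\varsystem|$ points on the sphere. The crucial fact is that stereographic projection maps spheres in $\reals^d$ to the intersections of hyperplanes with $S^d$ (that is, to $(d-1)$-spheres lying on $S^d$); consequently, choosing a separating sphere $\varclassifier$ in $\reals^d$ corresponds exactly to choosing a hyperplane in $\reals^{d+1}$ that cuts $S^d$, and the three sets $\varsystem(\varclassifier_<)$, $\varsystem(\varclassifier_>)$, $\varsystem(\varclassifier_=)$ correspond to points (and balls) lying below, above, or straddling that hyperplane.

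Next I would apply a \emph{conformal transformation} (a rotation of $S^d$) to reposition the lifted points so that their centerpoint lies at the origin; by the Centerpoint Theorem, any hyperplane through the origin then leaves at most $\frac{d+1}{d+2}|\varsystem|$ points strictly on either side, which after projecting back yields the required balance bound on $|\varsystem(\varclassifier_<)|$ and $|\varsystem(\varclassifier_>)|$. To control the number of balls cut, I would pick a \emph{random} hyperplane through the origin (equivalently, a random great sphere on $S^d$) and bound the expected number of balls whose boundary $(d-1)$-sphere is intersected. After pulling back through the conformal map and stereographic projection, each ball of radius comparable to its image contributes an intersection probability proportional to its projected radius; summing these probabilities and using that $\varsystem$ is a $k$-ply system—so the balls cannot overlap more than $k$-fold and their total ``$d$-dimensional measure'' is bounded—gives the $\BigO\big(k^{1/d}\cdot|\varsystem|^{1-1/d}\big)$ bound by a Hölder-type inequality. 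Since a sphere with at most the expected number of crossings must exist, this establishes the theorem non-constructively.

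The hard part will be the crossing bound, not the balance bound. The balance bound follows almost immediately from the centerpoint theorem once the lifting is set up correctly, but bounding the expected number of balls cut by a random great sphere requires carefully relating each ball's radius in $\reals^d$ to the measure it occupies on $S^d$ after the conformal map, then invoking the $k$-ply packing property to prevent too much total measure from concentrating near the cut. The delicate step is ensuring the conformal centering map does not distort the radii in a way that breaks the packing bound; Miller \etal\ handle this by a careful choice of the conformal map (a dilation in addition to the rotation) and by separately bounding the contribution of ``large'' versus ``small'' balls. I would follow that two-regime split, using the $k$-ply hypothesis to cap the number of large balls near the cut and a volume/area argument to cap the aggregate crossing probability of the small ones.
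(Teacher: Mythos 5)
The paper does not actually prove this theorem: it is imported verbatim from Miller \etal~\cite{miller1997separator} and used as a black box, so there is no in-paper proof to compare against. Your sketch is a faithful outline of the original argument in that reference --- stereographic lifting of the ball centers to $S^d$, conformal centering so that a centerpoint sits at the origin, the Centerpoint Theorem for the $\frac{d+1}{d+2}$ balance bound, a random great sphere for the cut, and the $k$-ply-plus-H\"older argument for the $\BigO\big(k^{1/d}|\varsystem|^{1-1/d}\big)$ expected crossing bound. One imprecision worth flagging: a rotation of $S^d$ alone cannot move the centerpoint, which is an interior point of the unit ball, onto the origin; the centering map must be a genuinely conformal one (a dilation of $\reals^d$ conjugated by stereographic projection, composed with a rotation), as you acknowledge only in your final paragraph --- state it up front, since the balance bound depends on it. The two steps you leave implicit are exactly where the substance lies: (i) that the images of the balls under this conformal map form a $k$-ply family of caps on $S^d$, so the sum of their normalized cap measures is $\BigO(k)$, and (ii) that a random great sphere cuts a cap of angular radius $\theta$ with probability $\BigO(\sin\theta)$, after which H\"older's inequality over the $|\varsystem|$ caps yields the stated bound. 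These are the lemmas Miller \etal\ actually prove, so your plan is the right one, but a complete write-up would need to supply them rather than gesture at a ``volume/area argument.''
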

Additionally, Miller \etal~\cite{miller1997separator} presented a sampling-based algorithm for approximately computing sphere separators.
We state their result with two additional properties that follow from their original proof; first, we state the result when applied to a subset $\varsamplesystem \subseteq \varsystem$. Note that $\varsamplesystem$ does not have to be a proper subset.
Secondly, we state the number of I/Os used.
\begin{theorem}[Randomized Separator Algorithm~\cite{miller1997separator}]
  Suppose $\varsystem$ is a $k$-ply neighborhood system in $\reals^d$, $\varsamplesystem \subseteq \varsystem$ is a subset of $\varsystem$.
  Then for any constant $\varepsilon>0$ we can compute a sphere~$\varclassifier$ such that with probability at least $1/2$
  \begin{align*}
    |\varsamplesystem(\varclassifier_<)|, |\varsamplesystem(\varclassifier_>)| &\leq \left( \frac{d+1}{d+2} + \varepsilon \right) |\varsamplesystem|\;,\\
    |\varsystem(\varclassifier_=)| &= \BigO\big(k^{1/d} \cdot |\varsystem|^{1-1/d}\big)\;.
  \end{align*}
  Furthermore, the algorithm uses $\BigO(\Scan(|\varsamplesystem| \cdot d) + c_2)$ I/Os, where $c_2$ is a constant depending only on $\varepsilon$ and $d$.
  % The running time of this algorithm is bounded by $c(\varepsilon, d) +
  % \BigO(|\varsamplesystem|d)$, where $c(\varepsilon, d)$ is a constant depending only on $\varepsilon$
  % and $d$.
  \label{thm:miller-separator}
\end{theorem}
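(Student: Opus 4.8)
The plan is to verify that the two additional properties are already implicit in the algorithm of Miller \etal~\cite{miller1997separator} and to track the I/O cost of its steps. Recall that their algorithm proceeds by stereographically projecting the ball centers onto the unit sphere $S^d \subset \reals^{d+1}$, computing a centerpoint of the projected points, applying a conformal (Möbius) map that sends this centerpoint to the origin, drawing a uniformly random hyperplane through the origin, and mapping the resulting great circle back to $\reals^d$ by the inverse projection to obtain the separating sphere~$\varclassifier$. First I would compute the centerpoint — more precisely, an $\varepsilon$-approximate centerpoint obtained from a random subsample of size depending only on $\varepsilon$ and $d$ — with respect to the \emph{projected sample} $\Pi(\varsamplesystem)$ rather than the whole system. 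Since a centerpoint in $\reals^{d+1}$ guarantees that every open halfspace bounded by a hyperplane through the origin contains at most a $\left(\frac{d+1}{d+2} + \varepsilon\right)$ fraction of the points it was computed for, and the random great circle corresponds precisely to such a hyperplane, the balance bound holds for $\varsamplesystem$, which yields the first stated inequality.

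For the second inequality I would observe that the crossing bound is oblivious to which subset determined the conformal map. The map is a diffeomorphism that sends balls to balls and preserves the ply of the system, so after mapping, $\varsystem$ is still a $k$-ply system of $|\varsystem|$ spherical caps on $S^d$. Conditioned on the fixed map, the expected number of these balls intersected by a uniformly random great circle is $\BigO(k^{1/d} |\varsystem|^{1-1/d})$ by the same area/Hölder argument as in the original proof: each ball is cut with probability proportional to its angular radius, and the $k$-ply constraint bounds the total measure so that summing the individual cutting probabilities subject to this constraint gives the stated bound. Applying Markov's inequality then bounds $|\varsystem(\varclassifier_=)|$ with constant probability, giving the second inequality for the full system $\varsystem$.

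To obtain the I/O bound I would note that all of the geometrically nontrivial work — subsampling and centerpoint-approximating the constant-size sample, forming the conformal map, and drawing the random great circle — is performed in internal memory on a sample whose size depends only on $\varepsilon$ and $d$, contributing the additive constant $c_2$. Drawing this subsample and reading the $d$ coordinates and radius of each element of $\varsamplesystem$ is achieved with a single pass, accounting for the $\Scan(|\varsamplesystem| \cdot d)$ term; producing the output sphere is $\BigO(d)$ additional work. Finally I would amplify the success probability to at least $1/2$ by a union bound over the two failure events — the approximate centerpoint being insufficiently central, and the random great circle cutting too many balls — each of which can be driven below a small constant by choosing the subsample size and the hidden constant in the crossing bound appropriately.

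The main obstacle I anticipate is making rigorous the separation of concerns between $\varsamplesystem$ and $\varsystem$: I must confirm that the crossing-number argument of Miller \etal~is genuinely independent of the set used to fix the conformal map, so that centering on $\varsamplesystem$ does not degrade the crossing bound for the larger system $\varsystem$, and that the ply of $\varsystem$ is preserved under the map even though the map was chosen without reference to $\varsystem$.
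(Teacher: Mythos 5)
Your proposal is correct and follows essentially the same route the paper relies on: the paper gives no proof of this theorem, merely asserting that the two added properties (restriction of the balance bound to the subset $\varsamplesystem$ and the I/O count) follow from Miller \etal's original argument, and your reconstruction --- centerpoint computed on the projected sample for the balance bound, ply-preservation of the conformal map plus the H\"older/Markov argument for the crossing bound on all of $\varsystem$, and a single scan plus constant-size internal-memory work for the I/O bound --- is precisely that argument. The separation of concerns you flag as the main obstacle is indeed the crux, and your resolution (the crossing bound depends only on the ply of $\varsystem$ after the map, not on which set fixed the map) is the right one.
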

We remark that the randomization in the algorithm is over random numbers chosen by the algorithm independent of the input.
Therefore, the algorithm can be used to find a sphere satisfying the inequalities by applying the algorithm an expected constant number of times~\cite{miller1997separator}.
When describing our algorithm, we will use \rthm{miller-separator} as a black box and refer to the resulting sphere as a \emph{sphere separator}.

\subsection{Multiway Separator}
We now present a generalization of the sphere separator result by Miller \etal~\cite{miller1997separator}.
Given a $k$-ply neighborhood system $\varsystem$ in $\reals^d$ and a parameter $r \leq |\varsystem|/k$,
an \emph{$r$-way division} of $\varsystem$ is a division of $\varsystem$ into $\BigO(r)$ non-disjoint \emph{regions} such that each ball in $\varsystem$ is contained in at least one region.
A region contains two types of balls: boundary balls and interior balls.
An \emph{interior ball} is contained in exactly one region and has non-empty intersection only with balls contained in the same region.
A \emph{boundary ball} is shared among at least two regions and has non-empty intersection with balls in multiple regions.
Each region contains $\BigO(|\varsystem|/r)$ balls in total which are stored consecutively on disk.
An \emph{$r$-way separator} is an $r$-way division where each region contains $\BigO \big(\sqrt{k|\varsystem|/r} \big)$ boundary balls.
It follows that the total number of boundary balls of an $r$-way separator is $\BigO \big( \sqrt{k|\varsystem|r} \big)$.
We use the term \emph{multiway separator} and \emph{multiway division} whenever $r$ is clear from the context.
% Furthermore, we will use the term \emph{$r$-way division} to describe a division of $\varsystem$ into $\BigO(r)$ regions of size $\BigO(|\varsystem|/r)$.

\subsection{Range Spaces, VC dimensions, and Samples}
The main result of this paper is obtained by computing a multiway separator on a sample of a given $k$-ply neighborhood system.
In order to prove correctness of our algorithm, we show that the result generalizes to the entire neighborhood system with at least constant probability.
This proof relies on the concepts of Vapnik–Chervonenkis dimension (VC dimension)~\cite{har2011geometric} and relative $\varepsilon$-approximations~\cite{har2011relative}.
Here, we will provide a quick summary of various definitions and theorems.
For a more in-depth introduction to VC-dimension, we refer to Har-Peled \etal~\cite{har2011geometric}.
\begin{definition}[Range Space]
  A \emph{range space} is a pair $(\vargroundset, \varclassifiers)$, where
  $\vargroundset$ is the ground set (finite or infinite) and $\varclassifiers$
  is a (finite or infinite) family of subsets of $\vargroundset$. The elements
  of $\varclassifiers$ are referred to as \emph{classifiers}.
\end{definition}
\begin{definition}[VC Dimension]
  Let $S = (\vargroundset, \varclassifiers)$ be a range space.
  Given $Y \subseteq \vargroundset$, let the \emph{intersection} of $Y$ and $\varclassifiers$ be defined as
  $$Y \cap \varclassifiers = \left\{ \varclassifier \cap Y \mid \varclassifier \in \varclassifiers \right\}\;.$$
  If $Y \cap \varclassifiers$ contains all subsets of $Y$, then we say that $Y$ is \emph{shattered} by $\varclassifiers$.
  The \emph{VC Dimension} of $S$, denoted by $\vcdim(S)$, is the maximum cardinality of a shattered subset of $\vargroundset$:
  $$ \vcdim(S) = \max \left\{ |Y| \bigmid Y \subseteq \vargroundset \wedge |Y \cap \varclassifiers| = 2^{|Y|} \right\}\;. $$
  If there are arbitrarily large shattered subsets, then $\vcdim(S) = \infty$.
\end{definition}
\begin{lemma}[VC Dimension of Halfspaces~{\cite[Chapter 5]{har2011geometric}}]
  Let $S = (\vargroundset, \varclassifiers_\text{halfspace})$ be the range space where $\vargroundset = \reals^d$ and $\varclassifiers_\text{halfspace}$ is the set of halfspaces in $\reals^d$.
  Then $S$ has VC~dimension~$d+1$.
  \label{lem:vc-halfspace}
\end{lemma}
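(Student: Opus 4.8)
The plan is to establish the two matching bounds $\vcdim(S) \geq d+1$ and $\vcdim(S) \leq d+1$ separately, as is standard for VC-dimension computations.

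For the lower bound, I would exhibit a set of $d+1$ points that is shattered by halfspaces. The natural choice is a set of $d+1$ \emph{affinely independent} points $p_1, \ldots, p_{d+1}$ in $\reals^d$ — for instance the origin together with the $d$ standard basis vectors, which are the vertices of a simplex. To verify that every subset $A$ of these points arises as the intersection of some halfspace with the point set, I would lift each point to $\tilde p_i = (p_i, 1) \in \reals^{d+1}$; affine independence of the $p_i$ is exactly linear independence of the $\tilde p_i$, so the lifted points form a basis of $\reals^{d+1}$. A halfspace $\{x : \langle a, x\rangle \geq b\}$ corresponds to evaluating the linear functional $(a, -b)$ on the lifted points, since $\langle a, p_i\rangle \geq b \iff \langle (a,-b), \tilde p_i\rangle \geq 0$. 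Because the $\tilde p_i$ form a basis, I can choose this functional to take any prescribed values — say positive exactly on the indices in $A$ and negative off $A$. Hence every dichotomy is realizable and the simplex is shattered, giving $\vcdim(S) \geq d+1$.

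For the upper bound I would show that no set of $d+2$ points can be shattered, and the key tool is \emph{Radon's theorem}: any $d+2$ points in $\reals^d$ admit a partition into two sets $A$ and $B$ whose convex hulls intersect in a common point $q$. I would then argue by contradiction that the dichotomy $(A, B)$ cannot be realized. If some halfspace $\{x : \langle a, x\rangle \geq b\}$ contained exactly the points of $A$ and excluded those of $B$, then writing $q$ as a convex combination of points of $A$ forces $\langle a, q\rangle \geq b$, while writing $q$ as a convex combination of points of $B$ forces $\langle a, q\rangle < b$ — a contradiction. Thus some subset is not cut out by any halfspace, so $\vcdim(S) < d+2$, and combining with the lower bound yields $\vcdim(S) = d+1$.

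The routine parts are the two convex-combination inequalities and the basis argument establishing realizability of sign patterns. The conceptual crux — and the step I expect to carry the real weight — is the appeal to Radon's theorem, since it is precisely the failure of linear separability of a Radon partition that caps the dimension at $d+1$; without it, the upper bound is not obvious.
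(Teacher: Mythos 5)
The paper does not give its own proof of this lemma; it is imported verbatim from the cited reference (Har-Peled, Chapter~5). Your argument is correct and is essentially the standard proof found there --- shattering a set of $d+1$ affinely independent points via the lifting-to-a-basis argument for the lower bound, and Radon's theorem for the upper bound --- so there is nothing in the paper to diverge from.
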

\begin{lemma}[Mixing of Range Spaces~{\cite[Chapter 5]{har2011geometric}}]
  Let $S_1 = (\vargroundset, \varclassifiers_1), \ldots, S_k = (\vargroundset, \varclassifiers_k)$ be $k$ range spaces which share the same ground set $\vargroundset$ and all have VC dimension at most $\varvcdim$.
  Consider the sets of classifiers $\varclassifiers_\cap$ and $\varclassifiers_\cup$, where
  \begin{align*}
    \varclassifiers_\cap &= \{ \varclassifier_1 \cap \cdots \cap \varclassifier_k \mid \varclassifier_1 \in \varclassifiers_1, \ldots, \varclassifier_k \in \varclassifiers_k \}\;, \\
    \varclassifiers_\cup &= \{ \varclassifier_1 \cup \cdots \cup \varclassifier_k \mid \varclassifier_1 \in \varclassifiers_1, \ldots, \varclassifier_k \in \varclassifiers_k \}\;.
  \end{align*}
  Then the range spaces $S_\cap = (\vargroundset, \varclassifiers_\cap)$ and $S_\cup = (\vargroundset, \varclassifiers_\cup)$ have VC dimension $\BigO(\varvcdim k \log k)$.
  \label{lem:vc-compose}
\end{lemma}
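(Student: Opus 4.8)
The plan is to bound the growth function of $S_\cap$ via the Sauer--Shelah lemma and then solve the resulting inequality for the size of a shattered set. First I would fix a set $Y \subseteq \vargroundset$ with $|Y| = m$ that is shattered by $\varclassifiers_\cap$, so that $|Y \cap \varclassifiers_\cap| = 2^m$, and aim to show $m = \BigO(\varvcdim k \log k)$.

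The key structural observation is that restriction to $Y$ commutes with intersection: for any classifier $\varclassifier_1 \cap \cdots \cap \varclassifier_k \in \varclassifiers_\cap$ we have $(\varclassifier_1 \cap \cdots \cap \varclassifier_k) \cap Y = (\varclassifier_1 \cap Y) \cap \cdots \cap (\varclassifier_k \cap Y)$. Hence every trace of $\varclassifiers_\cap$ on $Y$ is determined by the tuple of traces $(\varclassifier_i \cap Y)_{i=1}^{k}$, and therefore $|Y \cap \varclassifiers_\cap| \le \prod_{i=1}^{k} |Y \cap \varclassifiers_i|$. Since each $S_i$ has VC dimension at most $\varvcdim$, the Sauer--Shelah lemma bounds each factor by $\sum_{j=0}^{\varvcdim} \binom{m}{j} \le (em/\varvcdim)^{\varvcdim}$ once $m \ge \varvcdim$. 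Combining these gives $2^m \le (em/\varvcdim)^{\varvcdim k}$.

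Taking logarithms yields $m \le \varvcdim k \log_2(em/\varvcdim)$, and it remains to solve this transcendental inequality. Writing $t = m/\varvcdim$, it becomes $t \le k \log_2(e t)$, and the standard fact that $x \le a \ln x$ forces $x = \BigO(a \log a)$ gives $t = \BigO(k \log k)$, i.e.\ $m = \BigO(\varvcdim k \log k)$. I expect this last step to be the main obstacle: it is where the claimed $\log k$ factor---rather than $\log(\varvcdim k)$---must be extracted, and the point is to notice that the $\varvcdim$ appearing inside the logarithm cancels against the normalization $t = m/\varvcdim$, leaving only a $\log k$ dependence.

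For the union space $S_\cup$, I would reduce to the case just handled by complementation. Complementing every classifier of a range space yields a range space with the same traces on any $Y$ (complementation is a bijection on the subsets of $Y$), hence the same VC dimension $\varvcdim$. Since $\varclassifier_1 \cup \cdots \cup \varclassifier_k = \vargroundset \setminus ((\vargroundset \setminus \varclassifier_1) \cap \cdots \cap (\vargroundset \setminus \varclassifier_k))$ and complementing the whole family again preserves shattering, the bound on $\vcdim(S_\cap)$ for the complemented families transfers verbatim to $\vcdim(S_\cup)$.
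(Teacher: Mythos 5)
Your proof is correct, and it is essentially the standard argument from the cited reference (Har-Peled, Chapter 5): bound the trace of $\varclassifiers_\cap$ on a shattered set by the product of the individual traces, apply Sauer--Shelah, solve $2^m \leq (em/\varvcdim)^{\varvcdim k}$, and handle $\varclassifiers_\cup$ by complementation. The paper itself states this lemma without proof, so there is nothing internal to compare against, but your write-up correctly identifies and resolves the one delicate point, namely that the substitution $t = m/\varvcdim$ makes the logarithmic factor depend only on $k$.
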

\begin{definition}[Measure]
  Let $S = (\vargroundset, \varclassifiers)$ be a range space, and let $X \subseteq \vargroundset$ be a finite subset of~$\vargroundset$.
  The \emph{measure} of a classifier $\varclassifier \in \varclassifiers$ in $X$ is the quantity
  \begin{align*}
    \bar X(\varclassifier) = \frac{|\varclassifier \cap X|}{|X|}\;.
  \end{align*}
\end{definition}
\begin{definition}[Relative Approximation]
  Let $S = (\vargroundset, \varclassifiers)$ be a range space, and let $X \subseteq \vargroundset$ be a finite subset of $\vargroundset$.
  For given parameters $0<p,\varepsilon<1$, a subset $Y \subseteq X$ is a \emph{relative $(p,\varepsilon)$-approximation} for $(X,S)$ if, for each $\varclassifier \in \varclassifiers$, we have
  \begin{align*}
    (1-\varepsilon) \bar X(\varclassifier) \leq \bar Y(\varclassifier) \leq (1+\varepsilon) \bar X(\varclassifier) &\quad\text{if } \bar X(\varclassifier) \geq p\;. \\
    \bar X(\varclassifier) - \varepsilon p \leq \bar Y(h) \leq \bar X(\varclassifier) + \varepsilon p &\quad\text{if } \bar X(\varclassifier) < p\;.
  \end{align*}
\end{definition}
\begin{lemma}[Relative Approximation Sampling~\cite{har2011relative}]
  % Let $S = (\vargroundset, \varclassifiers)$ be a range space with VC dimension
  % $\varvcdim$. Given constants $0 < p, \varepsilon < 1$, then a sample $Y
  % \subseteq \vargroundset$ of size at least $c \cdot \varepsilon^{-2}
  % p^{-1}(\varvcdim \log p^{-1} + \log \phi^{-1})$ is a relative $(p,
  % \varepsilon)$-approximation with probability at least $1-\phi$.

  % \fxnote{where c>0 is a constant har-peled-2006}
  Let $S = (\vargroundset, \varclassifiers)$ be a range space with VC dimension $\varvcdim$, and let $X \subseteq \vargroundset$ be a finite subset of $\vargroundset$.
  Given parameters $0< p,\varepsilon,q <1$, a random sample $Y \subseteq X$ of size at least
  \begin{align*}
    \frac{c}{\varepsilon^2 p} \left(\varvcdim \log \frac 1p + \log \frac 1q \right)\;,
  \end{align*}
  for an appropriate constant $c$, is a relative $(p,\varepsilon)$-approximation
  for $(X,S)$ with probability at least $1-q$.
  \label{lem:vc-sample-size}
\end{lemma}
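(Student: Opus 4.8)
The plan is to prove the uniform relative-deviation bound via the classical symmetrization (``double sampling'') technique, combined with the Sauer--Shelah lemma and a variance-sensitive (multiplicative) Chernoff bound; this is the route of Har-Peled and Sharir~\cite{har2011relative}. First I would reformulate the two-regime guarantee as a single inequality: a sample $Y$ is a relative $(p,\varepsilon)$-approximation for $(X,S)$ exactly when, for every classifier $h \in \varclassifiers$,
\[
|\bar Y(h) - \bar X(h)| \leq \varepsilon \max\{\bar X(h),\, p\}.
\]
Verifying that this is equivalent to the two displayed conditions---the multiplicative bound when $\bar X(h) \geq p$ and the additive $\varepsilon p$ bound when $\bar X(h) < p$---is routine. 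The goal then reduces to bounding the probability that this inequality fails for \emph{some} $h$.

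The main device is symmetrization. I would introduce an independent ``ghost'' sample $Y'$ of the same size $m$ drawn from $X$ and argue that, provided $m$ is large enough, the probability that $Y$ deviates far from the true measure $\bar X(h)$ for some $h$ is dominated (up to a constant factor) by the probability that the two samples deviate from \emph{each other} for some $h$. The payoff is that only ranges realized on the finite combined set $Y \cup Y'$ of $2m$ points matter; by the Sauer--Shelah lemma the number of distinct such ranges is at most $\BigO\big((2m)^{\varvcdim}\big)$, so the supremum over the possibly infinite family $\varclassifiers$ becomes a union bound over polynomially many events.

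Next I would condition on the combined multiset $Y \cup Y'$ and view the remaining randomness as a uniformly random partition of these $2m$ points into the two halves (equivalently, independent random swaps within matched pairs). For a fixed range $h$ the discrepancy $\bar Y(h) - \bar Y'(h)$ is then a zero-mean sum whose variance is controlled by $\bar X(h)$, and a relative Chernoff/Hoeffding bound yields a tail decaying like $\exp\!\big(-\BigOmega(\varepsilon^2 m \max\{\bar X(h), p\})\big)$. The crucial point---and the technical heart of the argument---is that the concentration scales with $\sqrt{\bar X(h)}$, so a \emph{single} bound simultaneously delivers multiplicative accuracy for heavy ranges ($\bar X(h) \geq p$) and additive accuracy $\varepsilon p$ for light ranges; obtaining this $\sqrt{\bar X(h)}$-scaling cleanly and uniformly is where the care is required.

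Finally I would assemble the pieces: multiplying the single-range failure probability $\exp\!\big(-\BigOmega(\varepsilon^2 m p)\big)$ by the Sauer--Shelah count $\BigO(m^{\varvcdim})$ and forcing the product to be at most $q$ yields the stated sample size $m \geq \frac{c}{\varepsilon^2 p}\big(\varvcdim \log \frac 1p + \log \frac 1q\big)$ for a suitable constant $c$, where the $\varvcdim \log(1/p)$ term absorbs the logarithm of the range count and the $\log(1/q)$ term controls the overall failure probability. I expect the main obstacle to be precisely the relative, variance-sensitive concentration step: a plain additive Hoeffding bound would only produce an ordinary $\varepsilon$-approximation, so the relative guarantee hinges on the multiplicative tail together with the observation that the per-point indicators have variance proportional to $\bar X(h)$.
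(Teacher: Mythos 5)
This lemma is stated in the paper without proof; it is imported directly from Har-Peled and Sharir~\cite{har2011relative}, who in turn derive it from the relative-deviation bound of Li, Long and Srinivasan, so there is no in-paper argument to compare against. Your outline reconstructs the architecture of that standard proof correctly: the single inequality $|\bar Y(\varclassifier) - \bar X(\varclassifier)| \leq \varepsilon \max\{\bar X(\varclassifier), p\}$ is the right reformulation, symmetrization against a ghost sample reduces the supremum over $\varclassifiers$ to the finitely many traces on $2m$ points counted by Sauer--Shelah, and the variance-sensitive tail $\exp\big(-\BigOmega(\varepsilon^2 m \max\{\bar X(\varclassifier), p\})\big)$ is indeed the crux that separates a relative approximation from an ordinary $\varepsilon$-approximation.

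One quantitative gap remains in your final assembly. A flat union bound --- multiplying the Sauer--Shelah count $\BigO(m^{\varvcdim})$ by the single-range tail $\exp(-\BigOmega(\varepsilon^2 m p))$ --- forces $\varepsilon^2 m p = \BigOmega(\varvcdim \log m + \log (1/q))$, and since $\log m$ contains a $\log(1/\varepsilon^2)$ term this yields $m = \BigO\big(\tfrac{1}{\varepsilon^2 p}(\varvcdim \log\tfrac{1}{\varepsilon p} + \log\tfrac 1q)\big)$, which is weaker than the claimed $\varvcdim\log\tfrac 1p$ dependence when $\varepsilon$ is small. The claim that ``$\varvcdim\log(1/p)$ absorbs the logarithm of the range count'' is therefore not routine: eliminating the extra $\log(1/\varepsilon)$ is precisely what the chaining argument of Li, Long and Srinivasan (or a scale-by-scale union bound over ranges grouped by measure) accomplishes, and your sketch does not supply it. For the use made of the lemma in this paper $\varepsilon$ is a fixed constant, so the weaker bound would suffice for every downstream result, but as a proof of the lemma as stated this step is missing.
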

%
% \subsection{Basic Probabilistic Inequalities}
% \begin{lemma}[Linearity of Expectation]
%   Let $X_1, \ldots, X_n$ be real valued discrete random variables and let $X = \sum_i X_i$ be the random variable giving their sum. Then,
%   $$\expected[X] = \expected \Big[\sum_i X_i \Big] = \sum_i \expected[X_i].$$
%   \label{lem:linearity-expectation}
% \end{lemma}
% \begin{lemma}[Multiplicative Chernoff Bound]
%   Let $X_1, \ldots, Xn$ be independent random variables taking values in
%   $\{0,1\}$. Let $X$ denote their sum. Then for any $0 \leq \delta \leq 1$,
%   $$ \Pr \big[X > (1+\delta) \expected[X]\big] < e^{-\frac{\delta^2 \expected[X]}{2}}.$$
%   \label{lem:chernoff}
% \end{lemma}
% \begin{lemma}[Markov's Inequality~{\cite[Chapter 27]{har2011geometric}}]
%   Let $X$ be a real valued non-negative discrete random variable.
%   Then for any $c>1$, we have
%   $$ \Pr \big[X \geq c \cdot \expected[X]\big] \leq 1/c\;.$$
%   \label{lem:markov}
% \end{lemma}

\section{Multiway Separator Algorithm for \textit{k}-ply Neighborhood Systems}
\label{sec:algorithm}
In this section, we state our main result for I/O-efficiently computing an $r$-way separator of a $k$-ply neighborhood system $\varsystem$.
The algorithm can be applied to $k$-ply neighborhood systems in $\reals^d$ for any dimensions $d>2$, however, we prove correctness only for $d=2$.

We begin by presenting an algorithm that computes an $r$-way division of $\varsystem$ under the assumption that $k \leq \log \frac MB \log \log \frac MB$.
Given a $k$-ply neighborhood system $\varsystem$ in the plane and a parameter $r$, we let $\hat r = \min(r, \lfloor M/B \rfloor)$ and compute an $r$-way division by recursively computing $\hat r$-way divisions until $\varsystem$ is divided into regions of size $\BigO(|\varsystem|/r)$.
In order to compute an $\hat r$-way division on $\varsystem$, we sample a subset $\varsamplesystem \subseteq \varsystem$ of sufficiently large size.
By recursively computing sphere separators using \rthm{miller-separator}, we can compute an $\hat r$-way separator for $\varsamplesystem$.
Let $\varseptree$ denote the sphere separators that are computed during the recursion.
We refer to $\varseptree$ as a \emph{separator tree}.
We prove that with at least constant probability we obtain an $r$-way division by recursively applying the sphere separators of $\varseptree$ on $\varsystem$.
It follows that we obtain an $\hat r$-way division for $\varsystem$ by repeating this sampling-based algorithm an expected constant number of times.
This result provides guarantees on the number of boundary balls in the sample $\varsamplesystem$, however, we do not prove bounds for the total number of boundary balls in the $r$-way division of $\varsystem$.
We expect the number of boundary balls to be small and confirm so by experimental evaluation in later sections.
Additionally, by increasing the sample size and slightly modifying the algorithm, one can remove the assumption on $k$ and prove that the result is an $r$-way separator for $\varsystem$.
This results in an I/O-efficient algorithm for computing $r$-way separators when $\log^3 \frac MB \log \log \frac MB \log \frac{|\varsystem|}{k} = \BigO\big(\sqrt{Mk}\big)$.
In other words, we provide guarantees on the number of boundary balls by assuming $M$ is sufficiently large.

% Initially $\varsamplesystem$ consists of only interior balls.
% We compute a sphere separator $\varclassifier$ and separate $\varsamplesystem$ into regions $\varclassifier_\leq(\varsamplesystem)$ and $\varclassifier_\geq(\varsamplesystem)$.
% We note that the balls $\varclassifier_=(\varsamplesystem)$ are shared by both regions and, thus, become boundary balls in the $(M/B)$-way separator.
% We recursively compute such separators until the regions are size at most $c \left(|\varsamplesystem|/(M/B)\right)$, where $c>1$ is a sufficiently large constant.
% We show that this recursion results in an $(M/B)$-separator on $\varsamplesystem$ which can be represented as a binary tree $\varseptree$, where each node corresponds to the sphere separator computed in a node of the recursion.
% We refer to $\varseptree$ as a \emph{separator tree} in the following sections.
% We use $\varseptree$ to separate $\varsystem$ and show that this results in an $(M/B)$-way separator on $\varsystem$ with at least constant probability.
% The randomization is in numbers irrespective of the input.
% Thus, we can successfully compute an $(M/B)$-way separator for $\varsystem$ by repeating this procedure an expected constant number of times.
% We recursively compute such $(M/B)$-way separators until $\varsystem$ is
% partitioned into \BigO(r) regions of size $\BigO(|\varsystem|/r)$, thus, obtaining an $r$-way separator for $\varsystem$.
The rest of this section is structured as follows:
in \rsec{algorithm:sample}, given $\hat r \leq M/B$, we describe how to sample $\varsamplesystem$, recursively apply \rthm{miller-separator}, and prove that the result can be used to divide $\varsystem$ into regions of size $\BigO(|\varsystem|/\hat r)$ with at least constant probability.
In \rsec{algorithm:total-boundary}, we bound the number of regions to $\BigO(\hat r)$ and the total number of boundary balls in $\varsamplesystem$ to $\BigO(\sqrt{k|\varsystem|\hat r})$.
In \rsec{algorithm:region-boundary}, we bound the number of boundary balls in each region of $\varsamplesystem$ to $\BigO(\sqrt{k |\varsystem|/ \hat r})$.
Finally, in \rsec{algorithm:complexity}, we bound the expected number of I/Os used and state the final algorithm.

\subsection{Recursively Computing Separators}
\label{sec:algorithm:sample}
In this subsection, we describe how to sample $\varsamplesystem$, recursively apply \rthm{miller-separator}, and prove that the result can be used to divide $\varsystem$ into regions of size $\BigO(|\varsystem|/\hat r)$ with at least constant probability, where $\hat r \leq M/B$ and assuming $k \leq |\varsamplesystem|/\hat r$.
% Note that this does not prove a bound for the total number of regions since boundary balls occur in multiple regions.

First, sample $\varsamplesystem \subseteq \varsystem$ of size at least $c_0 \cdot \hat r \log^2 \hat r \log \log \hat r$, where $c_0>0$ is a  constant we choose later.
% Furthermore, let $l = \BigO(\log r)$. %c_1 \cdot \log \hat r$, where $c_1$ is a sufficiently large constant.
Letting $l = \BigO(\log \hat r)$, we recursively compute sphere separators on $\varsamplesystem$ for at most $l$ levels;
let $\varsamplesystem_i \subseteq \varsamplesystem$ denote the balls of $\varsamplesystem$ that occur in a node $i$ of the recursion.
In the root of the recursion, we let $\varsamplesystem_i = \varsamplesystem$.
At each node of the recursion, we compute a sphere separator $\varclassifier$ such that $\varsamplesystem_i(\varclassifier_\leq)$ and $\varsamplesystem_i(\varclassifier_\geq)$ are smaller than $\varsamplesystem_i$ by at least a constant factor.
For now, assume that such a sphere separator $h$ is obtained.
We then recurse on the two subproblems $\varsamplesystem_i(\varclassifier_\leq)$ and $\varsamplesystem_i(\varclassifier_\geq)$.
The recursion is continued until the problem size is at most $c \cdot (|\varsamplesystem|/\hat r)$, where $c>0$ is a sufficiently large constant.
The separator tree $\varseptree$ is then formed from the sphere separators by letting the nodes in $\varseptree$ correspond to the recursively computed sphere separators.

We proceed by describing how to compute a sphere separator $h$ in a node $i$ of the recursion.
% Recall that $\varsamplesystem_i \subseteq \varsamplesystem$ denotes the balls of $\varsamplesystem$ that occur $i$-th node of the recursion.
Using \rthm{miller-separator} and setting $\varepsilon = \frac {1}{12}$, we compute a sphere separator $\varclassifier$ that with probability at least $1/2$ satisfies
\begin{align}
  \label{eqn:sample-inside-size} |\varsamplesystem_i(\varclassifier_<)|, |\varsamplesystem_i(\varclassifier_>)| &\leq \frac {10}{12} \cdot |\varsamplesystem_i|\;, \\
  \label{eqn:sample-intersect-size} |\varsamplesystem_i(\varclassifier_=)| &\leq c_1 \sqrt{k|\varsamplesystem_i|}\;,
\end{align}
where $c_1 > 0$ is a constant.
% \begin{align}
%   \label{eqn:inside-size} |\varclassifier_<(\varsamplesystem_i)|, |\varclassifier_>(\varsamplesystem_i)| &\leq \left(\frac {3}{4} + \varepsilon_2 \right) |\varsamplesystem_i|\;, \\
%   \label{eqn:expected-boundary} \expected[|\varclassifier_=(\varsamplesystem_i)|] &\leq c_2 \sqrt{k|\varsamplesystem_i|}\;.
% \end{align}
Note that this uses $\BigO(\Scan(|\varsamplesystem_i|))$ I/Os.
% Letting $c_2>c_1$ be a constant, we use \reqn{expected-boundary} and Markov's inequality (\rlem{markov}) to obtain the following with probability at least $1 - c_1/c_2$:
% \begin{align}
%   \label{eqn:sample-intersect} |\varclassifier_=(\varsamplesystem_i)| \leq c_2 \sqrt{k|\varsamplesystem_i|}\;.
% \end{align}
We apply \rthm{miller-separator} an expected constant number of times until a separator $\varclassifier$ that satisfies \reqn{sample-intersect-size} and \reqn{sample-inside-size} is obtained.
% This will succeed after an expected constant number of iterations.
Since we divide $\varsamplesystem$ into regions of size at most $c \cdot |\varsamplesystem|/\hat r$, it follows that $|\varsamplesystem|/\hat r \leq |\varsamplesystem_i|/c$.
Using the assumption that $k \leq |\varsamplesystem|/\hat r$ and that \reqn{sample-intersect-size} holds, we upper bound $|\varsamplesystem_i(\varclassifier_=)|$ as follows:
\begin{align}
  |\varsamplesystem_i(\varclassifier_=)|
    \leq c_1 \sqrt{k|\varsamplesystem_i|}
    \leq c_1 \sqrt{\frac{|\varsamplesystem|}{\hat r}|\varsamplesystem_i|}
    \leq c_1 \sqrt{\frac{|\varsamplesystem_i|^2}{c}}
    \leq \frac{c_1}{\sqrt c} |\varsamplesystem_i| \;.
  \label{eqn:sample-intersect-bound}
\end{align}
% Divide $\varsamplesystem_i$ into subproblems $\varsamplesystem_i(\varclassifier_\leq)$ and $\varsamplesystem_i(\varclassifier_\geq)$.
% The size of $\varsamplesystem_i(\varclassifier_\leq)$ is bounded using \reqn{inside-size} and \reqn{sample-intersect}:
% \begin{align}
%   |\varsamplesystem_i(\varclassifier_\leq)| &= |\varsamplesystem_i(\varclassifier_<)| + |\varsamplesystem_i(\varclassifier_=)|
%                        \leq \frac {10}{12} \cdot |\varsamplesystem_i| + c_1 \sqrt{k |\varsamplesystem_i|}\;.
%   \label{eq:sample-oneside}
% \end{align}
% Correspondingly, $|\varsamplesystem_i(\varclassifier_\geq)|$ is bounded.
Thus, for $\sqrt c \geq 12 \cdot c_1$, it follows from \reqn{sample-intersect-bound} that $|\varsamplesystem_i(\varclassifier_=)| \leq \frac{1}{12} |\varsamplesystem_i|$.
Combining this with \reqn{sample-inside-size}, we obtain a separator $\varclassifier$ that satisfies
\begin{align}
  |\varsamplesystem_i(\varclassifier_\geq)|,|\varsamplesystem_i(\varclassifier_\leq)| \leq \frac{11}{12} |\varsamplesystem_i| \;.
  \label{eqn:sample-inside-bound}
\end{align}
% Correspondingly, we obtain $|\varsamplesystem_i(\varclassifier_\geq)| \leq \frac{11}{12} |\varsamplesystem_i|$.
Thus, the problem size becomes smaller by a constant factor and we can recursively compute separators $\varclassifier$ until $\varsamplesystem$ is divided into regions of size at most $c \cdot |\varsamplesystem|/\hat r$.
This requires at most $l = \BigO(\log r)$ levels of recursion.

We proceed by showing that $\varsystem$ is divided into regions of size $\BigO(|\varsystem|/\hat r)$ when $\varsystem$ is divided recursively using the sphere separators of $\varseptree$.
We proceed introducing the following two lemmas:
\begin{lemma}
  Let $\varcircles$ be the set of all circles in the plane and let $\vardisks$ be the set of all disks in the plane.
  Let $\varclassifiers_{\leq}$ be the set of classifiers defined as
  $\varclassifiers_{\leq} = \{ \vardisks(\varclassifier_\leq) \mid \varclassifier \in \varcircles \}$.
  % \begin{align*}
  %   \varclassifiers_{\leq} &= \{ S_\leq(\varcircles) \mid S \in \varcircles \}\;.
  % \end{align*}
  Correspondingly, we define $\varclassifiers_{\geq}$.
  The range spaces $(\vardisks, \varclassifiers_\leq)$ and $(\vardisks, \varclassifiers_\geq)$ have constant VC dimension.
  \label{lem:vc-circle}
\end{lemma}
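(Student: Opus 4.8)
The plan is to lift every disk to a point in $\reals^4$, so that both range spaces become restrictions of the halfspace range space, and then invoke \rlem{vc-halfspace} together with \rlem{vc-compose}. Write a disk $B \in \vardisks$ as a triple $(a,b,\rho)$ with center $(a,b)$ and radius $\rho$, and a circle $\varclassifier \in \varcircles$ as $(x,y,R)$. I would define the lifting map $\phi : \vardisks \to \reals^4$ by $\phi(B) = (a,b,\rho,\, a^2+b^2-\rho^2)$, which is injective since the first three coordinates already determine $B$. The principle I use repeatedly is elementary: if a family of classifiers on $\vardisks$ is obtained as the $\phi$-preimages of a family $\mathcal G$ of subsets of $\reals^4$, then its VC dimension is at most $\vcdim(\reals^4, \mathcal G)$, because injectivity of $\phi$ sends a shattered set of disks to an equally large set of points shattered by $\mathcal G$.

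For $\varclassifiers_\leq$ the reduction is clean. A disk $B$ lies in $\vardisks(\varclassifier_\leq)$ exactly when it meets the closed disk bounded by $\varclassifier$, that is when $\sqrt{(a-x)^2+(b-y)^2} \le R+\rho$. Both sides are nonnegative, so I may square; after expanding and collecting terms this becomes $(a^2+b^2-\rho^2) - 2xa - 2yb - 2R\rho + (x^2+y^2-R^2) \le 0$, which is linear in the coordinates of $\phi(B)$. Hence every classifier of $\varclassifiers_\leq$ is the $\phi$-preimage of a halfspace in $\reals^4$, and \rlem{vc-halfspace} with the principle above gives that $(\vardisks,\varclassifiers_\leq)$ has VC dimension at most $5$.

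The case $\varclassifiers_\geq$ is the real obstacle, because the analogous condition $\sqrt{(a-x)^2+(b-y)^2}+\rho \ge R$ does not linearize after squaring: the cross term reintroduces the square root. I would circumvent this by passing to the complementary family. Since $\vardisks(\varclassifier_<)$, $\vardisks(\varclassifier_=)$, $\vardisks(\varclassifier_>)$ partition $\vardisks$, we have $\vardisks(\varclassifier_\geq) = \vardisks \setminus \vardisks(\varclassifier_<)$, and complementing every classifier leaves the VC dimension unchanged, so it suffices to bound $\{\vardisks(\varclassifier_<) : \varclassifier \in \varcircles\}$. A disk is strictly contained in the disk bounded by $\varclassifier$ iff $\sqrt{(a-x)^2+(b-y)^2}+\rho < R$, and I claim this is equivalent to the conjunction of $(a-x)^2+(b-y)^2 < (R-\rho)^2$ and $\rho < R$: the forward direction is immediate, and conversely $\rho < R$ makes $R-\rho$ positive so the squared inequality may be unsquared. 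The first condition is again linear in $\phi(B)$, and the second, $\rho < R$, is a halfspace depending only on the $\rho$-coordinate; thus $\vardisks(\varclassifier_<)$ is the $\phi$-preimage of an intersection of two halfspaces in $\reals^4$.

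To finish, I would apply \rlem{vc-compose} with $k=2$ to two copies of the halfspace range space on $\reals^4$ (each of VC dimension $5$ by \rlem{vc-halfspace}), obtaining that the family of pairwise intersections of halfspaces has VC dimension $\BigO(1)$. Restricting to our subfamily and pulling back through $\phi$ bounds the VC dimension of $\{\vardisks(\varclassifier_<)\}$, hence of $(\vardisks,\varclassifiers_\geq)$, by a constant. The one point requiring care is the degenerate configuration in which the disk swallows the disk bounded by $\varclassifier$ (so $R-\rho<0$); the auxiliary halfspace $\rho < R$ is precisely what distinguishes this from genuine containment of $B$ inside $\varclassifier$, and it is the reason a single halfspace does not suffice for $\varclassifiers_\geq$.
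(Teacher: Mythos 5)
Your proof is correct and takes essentially the same route as the paper's: linearize the disk--circle predicates by lifting each disk to a point via its monomials and then invoke \rlem{vc-halfspace} and \rlem{vc-compose}. The only differences are that the paper lifts to $\reals^7$ and writes $\vardisks(\varclassifier_\geq)$ directly as a union of two halfspaces via the disjunction $r_d \geq r_\varclassifier \vee (x_\varclassifier-x_d)^2+(y_\varclassifier-y_d)^2 \geq (r_\varclassifier-r_d)^2$, whereas you use a leaner four-dimensional lift and pass to the complement $\vardisks(\varclassifier_<)$ as an intersection of two halfspaces --- the same case split on the sign of $R-\rho$, phrased dually.
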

The proof of \rlem{vc-circle} is included in \rapd{appendix-vc-proof}.
\begin{lemma}
  Let $\varclassifiers_l$ be the set of classifiers defined as
  \begin{align*}
    % \varclassifiers &= \big( \varclassifiers_\leq \cup\varclassifiers_< \cup \varclassifiers_\geq \cup \varclassifiers_> \cup \varclassifiers_{=} \big)\;, \\
    \varclassifiers_l &= \big\{ \varclassifier_1 \cap \cdots \cap \varclassifier_l \mid \varclassifier_1, \ldots, \varclassifier_l \in (\varclassifiers_\leq \cup \varclassifiers_\geq) \big\}\;.
  \end{align*}
  The range space $(\vardisks, \varclassifiers_l)$ has VC dimension $\BigO(l \log l)$.
  \label{lem:vc-compose-spheres}
\end{lemma}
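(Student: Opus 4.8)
The plan is to assemble $\varclassifiers_l$ from the two constant-VC-dimension range spaces provided by \rlem{vc-circle} and then apply the Mixing Lemma (\rlem{vc-compose}) with all $l$ factors drawn from a single auxiliary range space. By \rlem{vc-circle}, both $(\vardisks, \varclassifiers_\leq)$ and $(\vardisks, \varclassifiers_\geq)$ have VC dimension at most some absolute constant $\xi_0$. Each classifier in an $l$-fold intersection defining a member of $\varclassifiers_l$ is drawn from the combined family $\varclassifiers_\leq \cup \varclassifiers_\geq$, where each factor may independently come from either family; this is precisely the situation not covered verbatim by \rlem{vc-compose}, which composes exactly one classifier per range space. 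I would therefore first pass to the auxiliary range space $S_0 = (\vardisks, \varclassifiers_\leq \cup \varclassifiers_\geq)$ whose classifier family is the union of the two families.

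To bound $\vcdim(S_0)$, I would use the standard Sauer--Shelah counting argument. On any set of $m$ disks, the family $\varclassifiers_\leq$ cuts out at most $\BigO(m^{\xi_0})$ distinct subsets, and likewise for $\varclassifiers_\geq$, so $\varclassifiers_\leq \cup \varclassifiers_\geq$ realizes at most $\BigO(m^{\xi_0})$ subsets in total. A set of size $m$ can be shattered only if all $2^m$ subsets are realized, which forces $2^m = \BigO(m^{\xi_0})$ and hence $m = \BigO(\xi_0) = \BigO(1)$. Thus $S_0$ has constant VC dimension, say $\xi_1$.

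Finally, I would invoke \rlem{vc-compose} with $k = l$ and $S_1 = \cdots = S_l = S_0$, which all share the ground set $\vardisks$ and have VC dimension at most $\xi_1$. The intersection family produced is $\{ \varclassifier_1 \cap \cdots \cap \varclassifier_l \mid \varclassifier_i \in \varclassifiers_\leq \cup \varclassifiers_\geq \}$, which is exactly $\varclassifiers_l$, so the lemma gives $\vcdim(\vardisks, \varclassifiers_l) = \BigO(\xi_1 \cdot l \log l) = \BigO(l \log l)$ since $\xi_1 = \BigO(1)$. The only genuinely delicate step is the union-of-families reduction in the second paragraph; the rest is a direct instantiation of the two cited lemmas, taking care that the ground set $\vardisks$ is shared and may be infinite, which the VC-dimension machinery permits.
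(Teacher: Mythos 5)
Your proof is correct and follows the same overall skeleton as the paper's: pass to the combined range space $(\vardisks, \varclassifiers_\leq \cup \varclassifiers_\geq)$, establish that it has constant VC dimension, and then invoke \rlem{vc-compose} with $k=l$ identical copies of that range space to get the $\BigO(l \log l)$ bound. The only point where you diverge is the sub-step bounding the VC dimension of the union of the two classifier families: you use the Sauer--Shelah shatter-function count ($2^m \leq \BigO(m^{\xi_0})$ forces $m = \BigO(\xi_0)$), whereas the paper observes that any set shattered by $\varclassifiers_\leq \cup \varclassifiers_\geq$ is also shattered by $\varclassifiers_\cup = \{\varclassifier_1 \cup \varclassifier_2 \mid \varclassifier_1 \in \varclassifiers_\leq,\ \varclassifier_2 \in \varclassifiers_\geq\}$ and then applies \rlem{vc-compose} with $k=2$. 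Your counting argument is the more self-contained and standard route (it does not need the small verification, implicit in the paper, that each single classifier can be written as a union with a classifier realizing the empty set on the shattered set); the paper's containment trick avoids introducing Sauer--Shelah and stays entirely within the two lemmas it has already stated. Both are valid, and you correctly identified and handled the one genuinely delicate point, namely that \rlem{vc-compose} draws one classifier per range space and so must be applied to the merged family rather than to $\varclassifiers_\leq$ and $\varclassifiers_\geq$ separately.
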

\begin{proof}
  % Given a finite subset $Y \subset \vardisks$, it follows that one can always select a classifier $\varclassifier_\leq \in \varclassifiers_\leq$ such that $\varclassifier_\leq \cap Y = \emptyset$.
  % Correspondingly, one can select $\varclassifier_\geq \in \varclassifiers_\geq$ such that $\varclassifier_\geq \cap Y = \emptyset$.
  Observe that any finite subset $Y \subset \vardisks$ shattered in the range space $(\vardisks, \varclassifiers_\leq \cup \varclassifiers_\geq)$ can also be shattered in the range space $(\vardisks, \varclassifiers_\cup)$, where $\varclassifiers_\cup = \{\varclassifier_1 \cup \varclassifier_2 \mid \varclassifier_1 \in \varclassifiers_\leq, \varclassifier_2 \in \varclassifiers_\geq \}$.
  Thus the VC-dimension of $(\vardisks, \varclassifiers_\leq \cup \varclassifiers_\leq)$ is upper bounded by the VC-dimension of $(\vardisks, \varclassifiers_\cup)$.
  The proof now follows from \rlem{vc-circle} and \rlem{vc-compose}.
\end{proof}

Observe that the separator tree $\varseptree$ defines a set of regions such that each region is defined by the intersection of at most $l$ classifiers in the set $(\varclassifiers_\leq \cup \varclassifiers_\geq)$ corresponding to the sphere separators in a path from the root to a leaf in $\varseptree$.
Thus, a region can be defined by a classifier in $\varclassifiers_l$.
Furthermore, each region contains at most $c \cdot (|\varsamplesystem|/\hat r)$ balls of $\varsamplesystem$.
It now follows from \rlem{vc-compose-spheres} and \rlem{vc-sample-size}, that by sampling $\varsamplesystem$ with size at least $c_0 \cdot \hat r \log^2 \hat r \log \log \hat r$, $\varsamplesystem$ is a relative $(1/\hat r, \varepsilon)$-approximation of $\varsystem$ in the range space $(\vardisks, \varclassifiers_l)$ with at least constant probability.
The constant $c_0>0$ is chosen according to \rlem{vc-compose-spheres} and \rlem{vc-sample-size}.
Thus, with at least constant probability, the regions of $\varsystem$ contains at most $\BigO
\big(\frac{|\varsystem|}{|\varsamplesystem|} c \cdot (|\varsamplesystem|/\hat r)\big) = \BigO(|\varsystem|/\hat r)$ balls.
% Thus, $\varseptree$ also divides $\varsystem$ into regions of size $\BigO(|\varsystem|/\hat r)$ with probability at least $\varepsilon_1$.

\subsection{Bounding the Total Number of Boundary Balls}
\label{sec:algorithm:total-boundary}
In the previous subsection, we bounded the size of each region.
However, the number of regions may be large, since boundary balls occur in multiple regions.
% In order to bound the number of I/Os used by the algorithm, we note that whenever a ball $s \in \varsamplesystem_i$ is intersected by a separator sphere $\varclassifier$ in a node $i$ of the recursion, $s$ is included in both subproblems $\varclassifier_\leq(\varsamplesystem_i)$ and $\varclassifier_\geq(\varsamplesystem_i)$.
Recall that in a node $i$ of the recursion we obtain a sphere separator $\varclassifier$ such that the number of intersected balls is $|\varsamplesystem_i(\varclassifier_=)| \leq c_1 \sqrt{k |\varsamplesystem_i|}$.
We proceed to upper bound the total number of boundary balls by bounding the total number of intersected balls during the recursion.
% Note that this provides an upper bound since a boundary ball may be intersected more than once.
We show how to bound the total number of intersections in $\varsamplesystem$ by $\BigO \big(\sqrt{k|\varsamplesystem|\hat r}\big)$.

% Let $\varsystem$ be a $k$-ply neighborhood system and let $\varsamplesystem \subseteq \varsystem$ be the sample discussed previously.
In \rsec{algorithm:sample}, we argued that the recursion on $\varsamplesystem$ produces regions of size at most $a = c \frac{|\varsamplesystem|}{\hat r}$, where $c>0$ is a constant.
Furthermore, it follows from \reqn{sample-inside-bound} that the size of the smallest region is at least $(1/12) \cdot a$.
Similar to Frederickson~\cite{frederickson1987separator}, we let $b(|\varsamplesystem|)$ denote the number of intersections of balls in $\varsamplesystem$ during the recursion.
At each node $i$ of the recursion, $\varsamplesystem_i$ is divided into two regions $\varsamplesystem_i(\varclassifier_\leq)$ and $\varsamplesystem_i(\varclassifier_\geq)$ by the sphere separator $\varclassifier$.
Recall that $\varclassifier$ is selected such that $\varsamplesystem_i(\varclassifier_\leq) \leq \beta |\varsamplesystem_i|$, where $(1/12) \leq \beta \leq (11/12)$.
Furthermore, it follows from \reqn{sample-intersect-size} that $\varsamplesystem_i(\varclassifier_\geq) \leq (1 - \beta) \cdot |\varsamplesystem_i| + c_1 \sqrt{k|\varsamplesystem_i|}$.
We upper bound $b(|\varsamplesystem_i|)$ as follows:
\begin{equation*}
    b(|\varsamplesystem_i|) \leq \begin{cases}
      b \left(\beta |\varsamplesystem_i|\right) + b \left((1-\beta) \cdot |\varsamplesystem_i| + c_1 \sqrt{k|\varsamplesystem_i|}\right) + c_1 \sqrt{k|\varsamplesystem_i|} \quad &\text{if}\;|\varsamplesystem_i| > a  \\
      0 \quad &\text{if}\; \frac {1}{12} \cdot a \leq |\varsamplesystem_i| \leq a \\
     \end{cases}
\end{equation*}
It can be shown by induction in the size of $\varsamplesystem_i$ that $b(|\varsamplesystem_i|) = \BigO \big(\sqrt{k|\varsamplesystem_i|\hat r} \big)$.
The proof of this is included in \rapd{appendix-sample-boundary}.
Using the assumption $k \leq |\varsamplesystem|/\hat r$, it follows that the number of regions in $\varsamplesystem$ and $\varsystem$ is $\BigO \Big(\frac{|\varsamplesystem|+b(|\varsamplesystem|)}{(1/12) \cdot a}\Big) = \BigO\Big(\frac{|\varsamplesystem|}{a}\Big) = \BigO(\hat r)$.
Thus, the separator tree $\varseptree$ can be used to divide $\varsystem$ into $\BigO(\hat r)$ regions of size $\BigO(|\varsystem|/\hat r)$ by recursively applying the sphere separators of $\varseptree$ to $\varsystem$
% The total number of intersections in $|\varsystem|$ can be bounded to $\BigO(\sqrt{k|\varsystem|r})$ with at least constant probability using a similar proof by induction.
% The total number of boundary balls in $\varsystem$ can be bounded to $\BigO(\sqrt{k|\varsystem|\hat r})$ using a similar proof by induction.
% The details can be found in \rapd{appendix-total-boundary}.
% Finally, the number of boundary balls in each region can be reduced to $\sqrt{k|\varsystem|/r}$ using the technique described by Arge \etal~\cite[Section 3.3]{arge2013separator}.
% This technique is outlined in \rapd{appendix-reducing-boundary}.
% Thus, when $r \leq M/B$, we can compute a separator tree $\varseptree$ such that $\varseptree$ is an $r$-way separator for $\varsystem$ with at least constant probability.

\subsection{Reducing the Number of Boundary Balls in a Region}
\label{sec:algorithm:region-boundary}
In order to obtain an $r$-way separator of $\varsamplesystem$ from an $r$-way division of $\varsamplesystem$, we reduce the number of boundary balls in each region.
Let $b_i$ be the number of boundary balls in a region $\varregion_i$ of $\varsamplesystem$.
Letting $c_2>0$ be a constant, we describe how to ensure $b_i \leq c_2 \cdot \sqrt{k |\varsamplesystem|/\hat r}$ for each region $\varregion_i$.
We do this by adapting the technique described by Arge \etal~\cite[Section 3.3]{arge2013separator}.
It follows from the previous subsection that $\sum b_i = \BigO \big(\sqrt{k |\varsamplesystem| \hat r}\big)$.
Let $g$ denote the total number of regions which contain more than $c_2 \sqrt{k |\varsamplesystem|/\hat r}$ boundary balls.
For each region $\varregion_i$ where $b_i > c_2 \sqrt{k |\varsamplesystem|/\hat r}$, we recursively apply \rthm{miller-separator} on the boundary balls of $\varregion_i$.
In other words, we recursively compute sphere separators $\varclassifier$ that divide $\varregion_i$ into regions $\varregion_i(\varclassifier_\leq)$ and $\varregion_i(\varclassifier_\geq)$ with at most $(10/12) \cdot b_i + c_3 \sqrt{k |\varregion_i|}$ boundary balls each, where $c_3>0$ is a constant.
Recall that $|\varregion_i| \leq c \frac{|\varsamplesystem|}{\hat r}$, where $c>0$ is a constant.
It follows that for sufficiently large $c_2$, the region is divided into two regions such that the number of boundary balls in each is $(11/12) \cdot b_i$.
We recurse until the number of boundary balls is at most $c_2 \sqrt{k |\varsamplesystem|/\hat r}$.
Observe, that the total number of sphere separators required to divide all $g$ regions is
\begin{align*}
  \BigO \left(\sum \frac{b_i}{c_2 \sqrt{k |\varsamplesystem|/\hat r}}\right) = \BigO\left(\frac{\sqrt{k |\varsamplesystem| \hat r}}{c_2 \sqrt{k |\varsamplesystem|/\hat r}}\right) = \BigO(\hat r) \;.
\end{align*}
Thus, the total number of regions will be increased by only $\BigO(\hat r)$.
Furthermore, since each sphere separator adds $\BigO(\sqrt{k|\varsamplesystem|/\hat r})$ new boundary balls, the total number of boundary balls remains $\BigO(\sqrt{k|\varsamplesystem|\hat r})$.
Each separator can be computed using expected $\BigO(\Scan(|\varsamplesystem|/\hat r))$ I/Os and, thus, we can reduce the number of boundary balls in each region by using an additional $\BigO(\Scan(|\varsamplesystem|))$ I/Os.
Thus, the algorithm results in an $\hat r$-way separator for $\varsamplesystem$.
% Furthermore, we previously argued that with at least constant probability the $r$-way separator for $\varsamplesystem$ implies an $\hat r$-way division of $\varsystem$.

\subsection{Bounding the Total I/O-Complexity}
\label{sec:algorithm:complexity}
We now bound the total I/O-complexity of the algorithm described.
Let $\varseptree$ be the separator tree computed during the $\BigO(\log \hat r)$ levels of the recursion.
Recall, at each node $i$ of the recursion we compute a sphere separator $\varclassifier$ using \rthm{miller-separator} using expected $\Scan(|\varsamplesystem_i|)$ I/Os.
Thus, the expected number of I/Os used in a level of recursion is $\BigO \big(\Scan(|\varsamplesystem|+b(|\varsamplesystem|))\big)$.
Using the upper bound on $b(|\varsystem|)$ and the assumption $k \leq |\varsamplesystem|/\hat r$, we conclude that the total number of I/Os used during the $\BigO(\log \hat r)$ levels of recursion is $\BigO \big(\Scan(|\varsamplesystem|) \log \hat r\big)$.
The results can now be stated in the following lemma:
\begin{lemma}
  Given a $k$-ply neighborhood system $\varsystem$ in the plane, a parameter $\hat r \leq M/B$, and a random sample $\varsamplesystem$ such that $|\varsamplesystem| \geq c_0 \cdot \hat r \log^2 \hat r \log \log \hat r$ and $k \leq |\varsamplesystem|/\hat r$, where $c_0$ is a constant.
  Then an $\hat r$-way separator $\varseptree$ for $\varsamplesystem$ can be computed using expected $\BigO \big(\Scan(|\varsamplesystem|) \log \hat r \big)$ I/Os.
  Furthermore, with at least constant probability $\varseptree$ produces an $\hat r$-way division of $\varsystem$ when applied to $\varsystem$.
  \label{lem:sample-computation}
\end{lemma}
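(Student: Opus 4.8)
The plan is to assemble the lemma directly from the four preceding subsections, each of which establishes one of the required guarantees; the proof is essentially a matter of collecting these statements and checking that their probabilistic conclusions hold simultaneously. First I would fix the sample $\varsamplesystem$ of size at least $c_0 \cdot \hat r \log^2 \hat r \log \log \hat r$, run the recursive sphere-separator construction of \rsec{algorithm:sample} to build the separator tree $\varseptree$, and then apply the boundary-reduction step of \rsec{algorithm:region-boundary}. By the discussion in those subsections, $\varseptree$ is an $\hat r$-way separator for $\varsamplesystem$: the recursion terminates after $l = \BigO(\log \hat r)$ levels with every region holding at most $c \cdot (|\varsamplesystem|/\hat r)$ balls, \rsec{algorithm:total-boundary} bounds the number of regions by $\BigO(\hat r)$ and the total number of boundary balls by $\BigO(\sqrt{k|\varsamplesystem|\hat r})$, and \rsec{algorithm:region-boundary} drives the number of boundary balls in each region down to $\BigO(\sqrt{k|\varsamplesystem|/\hat r})$ while increasing both the region count and the total boundary count by only constant factors.

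For the I/O bound I would sum the costs reported in each subsection. Each invocation of \rthm{miller-separator} at a node $i$ costs expected $\BigO(\Scan(|\varsamplesystem_i|))$ I/Os, and since a good separator is obtained after an expected constant number of repetitions, a single level of the recursion costs expected $\BigO(\Scan(|\varsamplesystem| + b(|\varsamplesystem|)))$ I/Os. Combining the bound $b(|\varsamplesystem|) = \BigO(\sqrt{k|\varsamplesystem|\hat r})$ from \rsec{algorithm:total-boundary} with the hypothesis $k \leq |\varsamplesystem|/\hat r$ gives $b(|\varsamplesystem|) = \BigO(|\varsamplesystem|)$, so each level costs $\BigO(\Scan(|\varsamplesystem|))$. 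Over the $\BigO(\log \hat r)$ levels this yields $\BigO(\Scan(|\varsamplesystem|)\log \hat r)$, and the boundary-reduction step adds only a further $\BigO(\Scan(|\varsamplesystem|))$ I/Os, which is absorbed into the total expected cost of $\BigO(\Scan(|\varsamplesystem|)\log \hat r)$.

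The remaining and most delicate step is the transfer from $\varsamplesystem$ to $\varsystem$, for which I would invoke the relative-approximation machinery of \rsec{algorithm:sample}, and I expect this to be the crux of the argument. Each region produced by $\varseptree$ is the intersection of at most $l$ classifiers from $\varclassifiers_\leq \cup \varclassifiers_\geq$, hence is described by a single classifier in $\varclassifiers_l$; by \rlem{vc-compose-spheres} this range space has VC dimension $\BigO(l \log l) = \BigO(\log \hat r \log \log \hat r)$. Feeding this into \rlem{vc-sample-size} with $p = 1/\hat r$ shows that the chosen sample size makes $\varsamplesystem$ a relative $(1/\hat r, \varepsilon)$-approximation for $(\vardisks, \varclassifiers_l)$ with at least constant probability. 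On this event, every region holding $\BigO(|\varsamplesystem|/\hat r)$ balls of $\varsamplesystem$ also holds $\BigO(|\varsystem|/\hat r)$ balls of $\varsystem$; since the number of regions is a fixed structural property of $\varseptree$ already bounded by $\BigO(\hat r)$, applying the sphere separators of $\varseptree$ to $\varsystem$ yields $\BigO(\hat r)$ regions of size $\BigO(|\varsystem|/\hat r)$, i.e., an $\hat r$-way division of $\varsystem$. This VC-dimension bound is exactly what forces the polylogarithmic overfactor in the sample size, and it is also why only a \emph{division} (and not a full separator) is guaranteed for $\varsystem$: the relative approximation controls region cardinalities but says nothing about the boundary-ball counts of the larger system.
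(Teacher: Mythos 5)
Your proposal is correct and follows essentially the same route as the paper: \rlem{sample-computation} is proved there exactly by assembling \rsec{algorithm:sample}--\rsec{algorithm:region-boundary} for the separator on $\varsamplesystem$, summing the per-level costs $\BigO(\Scan(|\varsamplesystem|+b(|\varsamplesystem|)))$ over $\BigO(\log \hat r)$ levels for the I/O bound, and using the VC-dimension bound of \rlem{vc-compose-spheres} together with \rlem{vc-sample-size} to make $\varsamplesystem$ a relative $(1/\hat r,\varepsilon)$-approximation so that the regions of $\varsystem$ have size $\BigO(|\varsystem|/\hat r)$. Your closing observation that the relative approximation only controls region cardinalities, not boundary counts in $\varsystem$, matches the paper's own explanation of why only a division is claimed for $\varsystem$.
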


In order to compute the $\hat r$-way division of $\varsystem$, we apply \rlem{sample-computation} to obtain a separator tree $\varseptree$ such that the sphere separators of $\varseptree$ can be used to divide $\varsystem$ into $\BigO(\hat r)$ regions of size $\BigO(|\varsystem|/\hat r)$ with at least constant probability.
Note that since $\hat r = \lfloor M/B \rfloor$, the number of leaves in $\varseptree$ is $\BigO(M/B)$.
We partition $\varseptree$ into a constant number of subtrees such that each subtree $\hat \varseptree$ fits in memory along with one block per leaf of $\hat \varseptree$.
We now divide $\varsystem$ by recursing over the subtrees using $\BigO(\Scan(|\varsystem|))$ I/Os in total.
We repeat the above an expected constant number of times until an $\hat r$-way division of $\varsystem$ is obtained.

We proceed by bounding the number of I/Os used by \rlem{sample-computation} to $\BigO(\Scan(\varsystem))$.
We use a sample of size $c_0 \frac MB \log^2 \frac MB \log \log \frac MB$ and assume $|\varsystem| > \frac MB \log^3 \frac MB \log \log \frac MB$.
% We later show how the number of I/Os can be bounded for all values of $|\varsystem|$.
Under this assumption, the expected number of I/Os is bounded as follows:
\begin{align*}
  \Scan(|\varsamplesystem|) \log \hat r &= \BigO \left(\frac{\hat r \log^3 \hat r \log \log \hat r}{B} \right)
                          = \BigO \left(\frac {M}{B^2} \log^3 \frac MB \log \log \frac MB \right)\\
                          &= \BigO \left(\frac{|\varsystem|}{B} \right) = \BigO \big(\Scan(|\varsystem|) \big)\;.
\end{align*}
% Thus, \rlem{sample-computation} can be applied using expected $\BigO(\Scan(|\varsystem|))$ I/Os under the given assumption.
Next, consider the case when $|\varsystem| \leq \frac MB \log^3 \frac MB \log \log \frac MB$.
We observe that it is sufficient to divide $\varsystem$ into regions of size $\BigO(M)$, since regions of size $\BigO(M)$ can be divided further by directly applying \rthm{miller-separator} on the regions.
That is, each region of size $\BigO(M)$ fits in memory after a constant number of applications of \rthm{miller-separator}, so directly applying \rthm{miller-separator} uses $\BigO(\Scan(|\varsystem|))$ additional I/Os.
Thus, using \rlem{sample-computation}, we compute a separator tree $\varseptree$ such that $\varseptree$ can be used to divide $\varsystem$ into $\BigO(\bar r)$ regions of size $\BigO(M)$ with at least constant probability, where
\begin{align*}
  \bar r=\frac 1B \log^3 \frac MB \log \log \frac MB \;.
\end{align*}
It follows that a sample $\bar \varsamplesystem$ of size $\BigO(\polylog (M/B)) = \BigO(M)$ is sufficient.
% Furthermore, with at least constant probability we obtain a separator tree $\varseptree$ that divides $\varsystem$ into regions of size $\BigO(M)$.
Similar to before, we repeat the sampling and separator computation until an $\bar r$-way division of $\varsystem$ is found.
This uses expected $\BigO(\Scan(|\varsystem|))$ I/Os since the problem size fits in internal memory after a constant number of levels of recursion.

\begin{lemma}
  Given a $k$-ply neighborhood system $\varsystem$ in the plane and a parameter $\hat r \leq M/B$ such that $k \leq \log^2 \frac MB \log \log \frac MB$, an $\hat r$-way division of $\varsystem$ can be computed using expected $\BigO \big(\Scan(|\varsystem|)\big)$ I/Os.
  \label{lem:base-case}
\end{lemma}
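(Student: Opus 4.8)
The plan is to derive \rlem{base-case} directly from \rlem{sample-computation}, whose two hypotheses I must arrange to hold and whose I/O bound I must then simplify to $\BigO(\Scan(|\varsystem|))$. Those hypotheses are a lower bound on the sample size and the requirement $k \le |\varsamplesystem|/\hat r$. I would satisfy both at once by always drawing a sample of size $c_0 \frac MB \log^2 \frac MB \log\log \frac MB$: since $\hat r \le M/B$ and $x \log^2 x \log\log x$ is increasing, this is at least $c_0 \hat r \log^2 \hat r \log\log \hat r$, and it gives $|\varsamplesystem|/\hat r \ge c_0 \log^2 \frac MB \log\log \frac MB \ge k$ for $c_0 \ge 1$. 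This is exactly the point at which the lemma's hypothesis $k \le \log^2 \frac MB \log\log \frac MB$ is consumed. Applying \rlem{sample-computation} then yields a separator tree $\varseptree$ that produces an $\hat r$-way division of $\varsystem$ with constant probability, and repeating an expected constant number of times removes the failure probability.

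The first main step is the \emph{large} regime $|\varsystem| > \frac MB \log^3 \frac MB \log\log \frac MB$. Here I would bound the cost $\Scan(|\varsamplesystem|)\log\hat r$ of building $\varseptree$: substituting the sample size and $\log \hat r \le \log \frac MB$ gives $\BigO\big(\frac{M}{B^2}\log^3\frac MB\log\log\frac MB\big)$, and the lower bound on $|\varsystem|$ turns this into $\BigO(|\varsystem|/B) = \BigO(\Scan(|\varsystem|))$. Applying $\varseptree$ to $\varsystem$ is the remaining work: because $\hat r \le M/B$, the tree has $\BigO(M/B)$ leaves, so I would partition it into a constant number of subtrees, each fitting in internal memory together with one output block per leaf, and route $\varsystem$ through these subtrees in $\BigO(\Scan(|\varsystem|))$ I/Os.

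The second step is the \emph{small} regime $|\varsystem| \le \frac MB \log^3 \frac MB \log\log \frac MB$, where the stray $\log\hat r$ overhead of \rlem{sample-computation} is no longer absorbed by $|\varsystem|$. The idea is to observe that it suffices to divide $\varsystem$ only down to regions of size $\BigO(M)$: each such region fits in memory and can be subdivided to size $\BigO(|\varsystem|/\hat r)$ by directly and recursively applying \rthm{miller-separator} in internal memory, at total cost $\BigO(\Scan(|\varsystem|))$. Producing regions of size $\BigO(M)$ is a $\bar r$-way division with $\bar r = \frac 1B \log^3 \frac MB \log\log \frac MB$, for which a sample of size $\BigO(\polylog(M/B)) = \BigO(M)$ suffices; building and applying the tree for such a tiny $\bar r$ costs $\BigO(\Scan(|\varsystem|))$, since after a constant number of recursion levels the subproblem fits in internal memory.

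The step I expect to be the main obstacle is the case analysis itself. Because the bound from \rlem{sample-computation} carries the extra $\log\hat r$ factor, a single uniform application does not yield $\BigO(\Scan(|\varsystem|))$; the delicate point is recognizing that in the small regime one should halt the external subdivision early, at region size $\BigO(M)$, and finish with in-memory recursion via \rthm{miller-separator}, so that the polylogarithmic overhead is charged against a sample that fits in memory rather than against $|\varsystem|$. Verifying that the $\BigO(M/B)$-leaf tree can be applied in a single scan through a constant-subtree partition, and that the sample fits in memory in the small regime, is then routine once the two regimes are correctly delineated.
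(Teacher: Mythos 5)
Your proposal is correct and follows essentially the same route as the paper: the same sample size $c_0 \frac MB \log^2 \frac MB \log\log \frac MB$, the same two-regime case split on $|\varsystem|$ versus $\frac MB \log^3 \frac MB \log\log \frac MB$, the same early halt at region size $\BigO(M)$ with in-memory refinement via \rthm{miller-separator} in the small regime, and the same constant-subtree partition of the $\BigO(M/B)$-leaf tree for the scan-based application to $\varsystem$. Your explicit observation that the hypothesis $k \leq \log^2 \frac MB \log\log \frac MB$ is consumed precisely to guarantee $k \leq |\varsamplesystem|/\hat r$ is a point the paper leaves implicit, but it is the intended reading.
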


We now present the final algorithm for computing an $r$-way division for $\varsystem$ and general $r$.
This algorithm follows immediately from \rlem{base-case}.
Let $\hat r = \min(r, \lfloor M/B \rfloor)$ and recursively apply \rlem{base-case} for $\BigO(\log_{M/B}(r))$ levels of recursion.
This uses a total of $\BigO \big(\Scan(|\varsystem|) \log_{M/B}(r) \big)$ I/Os.
\begin{theorem}
  Given a $k$-ply neighborhood system $\varsystem$ in the plane, an $r$-way division of $\varsystem$ can be computed using expected $\BigO \big(\Scan(|\varsystem|) \log_{M/B}(r) \big)$ I/Os, assuming $k \leq \log^2 \frac MB \log \log \frac MB$.
  \label{thm:recursive-case}
\end{theorem}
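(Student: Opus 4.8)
The plan is to obtain the algorithm for general $r$ by chaining \rlem{base-case} in a top-down recursion. Set $\hat r = \min(r, \lfloor M/B \rfloor)$. If $r \leq M/B$, then $\hat r = r$ and a single application of \rlem{base-case} already produces an $r$-way division in expected $\BigO(\Scan(|\varsystem|))$ I/Os, so assume $r > M/B$. First I would apply \rlem{base-case} to $\varsystem$ to obtain $\BigO(\hat r)$ regions, each of size $\BigO(|\varsystem|/\hat r)$, and then recurse on each region independently, again with parameter $\hat r$, stopping a branch as soon as its region has size $\BigO(|\varsystem|/r)$. The leaf regions of this recursion are the candidate $r$-way division.

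The correctness of each recursive call rests on two observations. First, every region is a subset of the balls of $\varsystem$ and is therefore itself a $k$-ply neighborhood system in the plane with ply at most $k$; hence the hypothesis $k \leq \log^2 \frac MB \log \log \frac MB$ of \rlem{base-case} is inherited by every subproblem, as is the requirement $\hat r \leq \lfloor M/B \rfloor$. Second, each level shrinks the maximum region size by a factor $\BigTheta(\hat r) = \BigTheta(M/B)$, so after $\BigO(\log_{M/B} r)$ levels every region has size $\BigO(|\varsystem|/r)$, giving the required size bound on the leaf regions.

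For the I/O bound I would argue level by level. The total output size of one $\hat r$-division on an input $\varregion$ is $\BigO(|\varregion|)$, since by \rlem{base-case} (established in \rsec{algorithm:total-boundary}) it has $\BigO(\hat r)$ regions each of size $\BigO(|\varregion|/\hat r)$. Summing over all regions $\varregion_i$ present at a given level, the combined input size passed to the next level telescopes to $\BigO(|\varsystem|)$. Because $\Scan$ is linear, the combined cost of all $\hat r$-divisions at one level is $\sum_i \BigO(\Scan(|\varregion_i|)) = \BigO(\Scan(|\varsystem|))$, and since each invocation of \rlem{base-case} is a Las Vegas procedure, linearity of expectation lets this bound carry over to the expected cost. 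Multiplying by the $\BigO(\log_{M/B} r)$ levels yields the claimed expected $\BigO(\Scan(|\varsystem|)\log_{M/B} r)$ I/Os.

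I expect the delicate point to be verifying that the total size per level — and hence the final number of leaf regions — genuinely stays $\BigO(|\varsystem|)$ and $\BigO(r)$ rather than accumulating the hidden constants from the per-division size and region-count guarantees over the $\BigO(\log_{M/B} r)$ levels. The key is that the stopping rule is phrased in terms of the absolute region size $\BigO(|\varsystem|/r)$: combined with the telescoped total-size bound $\BigO(|\varsystem|)$, the number of leaf regions equals (total size) divided by (minimum region size) and is therefore $\BigO(r)$, matching the region-count requirement of an $r$-way division. The remaining edge case $\log_{M/B} r < 1$, i.e. $r \le M/B$, is covered by the single-application base case noted above, so the stated bound holds under the usual convention that the $\log_{M/B}$ factor is at least $1$.
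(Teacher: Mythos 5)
Your proposal is correct and follows essentially the same route as the paper: the paper's own proof of \rthm{recursive-case} is exactly to set $\hat r = \min(r, \lfloor M/B \rfloor)$ and recursively apply \rlem{base-case} for $\BigO(\log_{M/B}(r))$ levels, charging $\BigO(\Scan(|\varsystem|))$ expected I/Os per level. Your write-up in fact supplies more detail than the paper does (the inheritance of the ply bound by subregions, the telescoping per-level cost, and the leaf-count accounting), and the ``delicate point'' you flag about constants accumulating over levels is glossed over by the paper as well, so your treatment is at least as careful as the original.
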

Furthermore, this implies an algorithm for Koebe-embeddings, since Koebe-embeddings are $1$-ply neighborhood systems.
\begin{theorem}
  Given a planar graph $\vargraph$ and a Koebe-embedding $\varsystem$ of $\vargraph$,
  an $r$-way division of $\vargraph$ can be computed using expected $\BigO \big(\Scan(|\varsystem|) \log_{M/B}(r) \big)$ I/Os.
  \label{thm:koebe-algorithm}
\end{theorem}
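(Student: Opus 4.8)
The plan is to obtain this statement as a direct corollary of \rthm{recursive-case} by recognizing that a Koebe-embedding is a $1$-ply neighborhood system in the plane. Recall from the preliminaries that a Koebe-embedding of $\vargraph$ is a disk packing, i.e.\ a set of closed disks in $\reals^2$ with pairwise disjoint interiors, and that a disk packing is a $1$-ply neighborhood system. Hence I would set $k=1$ and $d=2$ and simply invoke \rthm{recursive-case} on $\varsystem$. The only hypothesis to discharge is the assumption $k \leq \log^2 \frac MB \log \log \frac MB$, which holds trivially for $k=1$ under any realistic values of $M$ and $B$. This immediately yields an $r$-way division of the neighborhood system $\varsystem$ using expected $\BigO\big(\Scan(|\varsystem|) \log_{M/B}(r)\big)$ I/Os.

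The remaining work is to translate the $r$-way division of $\varsystem$ into an $r$-way division of the graph $\vargraph$. Since the center of each disk in $\varsystem$ corresponds to a vertex of $\vargraph$, a division of the balls of $\varsystem$ into regions induces a division of the vertices of $\vargraph$ into the same regions. I would argue that this induced division satisfies the definition of an $r$-way division for $\vargraph$: the $\BigO(r)$ regions each contain $\BigO(|\varsystem|/r) = \BigO(N/r)$ vertices, matching the bound for the neighborhood system, where $N=|\varsystem|$ is the number of vertices. The key point is that the notions of interior and boundary balls correspond exactly to interior and boundary vertices.

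The main obstacle is precisely this correspondence between the intersection structure of the disks and the adjacency structure of $\vargraph$, and I would verify it explicitly. By definition of the Koebe-embedding, two disks are tangent if and only if the corresponding vertices are adjacent. I would observe that for two disks with disjoint interiors, non-empty intersection is equivalent to tangency: two circles meeting in two points would force a lens-shaped region lying in both interiors, contradicting disjointness of interiors, so the only common point two such disks can share is a single tangency point. Consequently, two balls of $\varsystem$ have non-empty intersection exactly when the corresponding vertices are adjacent. A ball is therefore an interior ball (intersecting only balls in its own region) if and only if the corresponding vertex is an interior vertex (adjacent only to vertices in its own region), and likewise for boundary balls and boundary vertices. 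This establishes that the $r$-way division of $\varsystem$ delivered by \rthm{recursive-case} is an $r$-way division of $\vargraph$ with the stated I/O bound, completing the proof.
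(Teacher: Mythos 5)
Your proposal is correct and follows essentially the same route as the paper, which derives this theorem in a single sentence by observing that a Koebe-embedding is a $1$-ply neighborhood system and invoking \rthm{recursive-case} with $k=1$. Your additional verification that, for disks with disjoint interiors, non-empty intersection coincides with tangency (and hence with adjacency in $\vargraph$) is a correct and worthwhile elaboration of a step the paper leaves implicit.
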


Note that the above results in an $r$-way division, but does not provide any bounds on the number of boundary balls when dividing $\varsystem$.
In \rapd{appendix-large-sample}, we argue that the number of boundary balls in $\varsystem$ can be bounded by increasing the sample size.
The result is stated in the theorem below.

\begin{theorem}
  Given a $k$-ply neighborhood system $\varsystem$ in the plane,
  an $r$-way separator of $\varsystem$ can be computed using expected $\BigO \big(\Scan(|\varsystem|) \log_{M/B}(r) \big)$ I/Os, assuming $k \leq |\varsystem|/r$ and
  \begin{align*}
    \log^3 \frac MB \log \log \frac MB \log \frac{|\varsystem|}{k} = \BigO \Big(\sqrt{Mk}\Big) \;.
  \end{align*}
  \label{thm:large-sample}
\end{theorem}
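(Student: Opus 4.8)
The plan is to upgrade the $r$-way \emph{division} of \rthm{recursive-case} into an $r$-way \emph{separator} by bounding the number of boundary balls that arise when the separator tree $\varseptree$ is applied to the full system $\varsystem$, rather than only to the sample $\varsamplesystem$. The crucial observation is that the intersection guarantee of \rthm{miller-separator} is stated for the \emph{full} neighborhood system: a sphere $\varclassifier$ computed (for balance) on $\varsamplesystem_i$ still satisfies $|\varsystem_i(\varclassifier_{=})| = \BigO(\sqrt{k|\varsystem_i|})$ with probability at least $1/2$, even though only $\varsamplesystem_i$ is read. Hence, if we can guarantee that at every node $i$ of $\varseptree$ the chosen sphere intersects only $\BigO(\sqrt{k|\varsystem_i|})$ balls of $\varsystem$, then the recurrence of \rsec{algorithm:total-boundary} and the per-region reduction of \rsec{algorithm:region-boundary} carry over essentially verbatim from $\varsamplesystem$ to $\varsystem$ (using the assumption $k \leq |\varsystem|/r$ exactly as $k \leq |\varsamplesystem|/\hat r$ was used there), yielding total boundary $\BigO(\sqrt{k|\varsystem|r})$ and per-region boundary $\BigO(\sqrt{k|\varsystem|/r})$.

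First I would explain why one cannot simply transfer the sample bound $|\varsamplesystem_i(\varclassifier_{=})| \leq c_1\sqrt{k|\varsamplesystem_i|}$ to $\varsystem$ through a relative approximation: intersection counts scale like the \emph{square root} of the region size, so the intersection density is a factor $\sqrt{|\varsystem|/|\varsamplesystem|}$ higher in the sample, and multiplicative preservation of measure would overshoot the target by exactly this factor. Instead, I would use the sample only to \emph{certify} the full-system guarantee. Consider the range space of intersect-classifiers $\{\vardisks(\varclassifier_{=}) \mid \varclassifier \in \varcircles\}$; since $\vardisks(\varclassifier_{=}) = \vardisks(\varclassifier_{\leq}) \cap \vardisks(\varclassifier_{\geq})$, \rlem{vc-circle} and \rlem{vc-compose-spheres} give it, intersected with the region classifiers along a root-to-node path, VC dimension $\BigO(l \log l)$. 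I would then make $\varsamplesystem$ a relative $(p,\varepsilon)$-approximation for this range space with $p$ at the scale of the \emph{smallest} intersection target, so that an observed small sample-intersection certifies a small full-system intersection, and so that this certification holds simultaneously at all $\BigO(r)$ nodes of $\varseptree$.

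The binding scale is set by the memory-sized regions at the base of the recursion, where the boundary budget $\sqrt{kM}$ is tightest; this forces $p = \BigTheta(\sqrt{k/M})$, so $1/p = \BigTheta(\sqrt{M/k})$. By \rlem{vc-sample-size} each such region then needs a sample of size $\BigTheta\big(\sqrt{M/k}\,(\varvcdim \log(1/p) + \log(1/q))\big)$, and summing over regions the overall sample grows to roughly $\frac{|\varsystem|}{\sqrt{Mk}}\,\polylog(M/B)\,\log\frac{|\varsystem|}{k}$. Requiring that this enlarged sample still keep the cost at $\BigO(\Scan(|\varsystem|)\log_{M/B} r)$ — that is, that the per-level sampling overhead stay within the scan bound — is precisely the stated condition $\log^3\frac MB\log\log\frac MB\log\frac{|\varsystem|}{k} = \BigO(\sqrt{Mk})$. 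Under this condition the remaining work (the recurrence transfer and the per-region reduction) is identical to \rsec{algorithm:total-boundary} and \rsec{algorithm:region-boundary}, and the I/O accounting is unchanged from \rthm{recursive-case}.

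The hard part will be the certification step: showing that driving the relative-approximation scale down to $p = \BigTheta(\sqrt{k/M})$ really does let the sample vouch for the full-system intersection bound uniformly over all spheres occurring in $\varseptree$, while keeping the sample small enough that a mere \emph{logarithmic} dependence on $|\varsystem|$ — rather than a polynomial one — suffices. Collapsing the union bound over the $\BigO(r)$ nodes and the $\BigO(\log_{M/B} r)$ recursion levels into the single clean condition, via the $\BigO(l \log l)$ VC bound, is where the care concentrates; the geometric content is inherited entirely from \rthm{miller-separator}.
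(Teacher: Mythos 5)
Your proposal follows essentially the same route as the paper's appendix proof: it exploits that \rthm{miller-separator}'s intersection bound holds for the full system $\varsystem$, adds the intersect-classifiers $\vardisks(\varclassifier_=)$ to the composed range space of VC dimension $\BigO(l\log l)$, uses a relative $(p,\varepsilon)$-approximation at the scale of the per-region boundary target to let the sample certify the full-system intersection counts, and derives the stated condition by forcing the enlarged sample of size roughly $\frac{|\varsystem|}{\sqrt{Mk}}\,\polylog(M/B)\log\frac{|\varsystem|}{k}$ to scan within $\BigO(\Scan(|\varsystem|))$. The only quibbles are that the paper normalizes $p=\sqrt{k/(|\varsystem|\hat r)}$ globally rather than per region (the resulting sample size is the same), and that it must additionally cap the number of separator retries per node at $\BigO(\log \hat r)$ to guarantee termination when $\varsamplesystem$ fails to be a relative approximation, which contributes one of the three log factors you attribute to the I/O accounting.
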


\section{Applications to Delaunay Triangulations and Terrain}
\label{sec:terrain}
Motivated by applications in geographic information systems, we turn our attention to Delaunay triangulations.
Delaunay triangulations are often used in practice to compute TINs from point clouds and can be computed I/O-efficiently using $\BigO(\Sort(N))$ I/Os~\cite{agarwal2005delaunay}.
In the previous section, we described how an $r$-way separator for a planar graph $\vargraph$ can be computed I/O-efficiently, when given a Koebe-embedding of $\vargraph$ and having certain assumptions on the size of the internal memory.
However, to our knowledge, there are no known I/O-efficient algorithms for the computation of Koebe-embeddings.
Therefore, we describe how to adapt our algorithm to compute separators for triangulations without computing a Koebe-embedding.
This adaptation can also be applied to any triangulation in the plane and, thus, any planar graph since a planar graph $\vargraph$ can be triangulated by trivially adding edges until every face of $\vargraph$ is a triangle.

Observe that the circumcircles of a triangulation $\vargraph$ in the plane form a $k$-ply neighborhood system $\varsystem$ in the plane.
However, $k$ is $\BigOmega(|\vargraph|)$ in the worst-case since many circumcircles may overlap in one point.
Miller \etal~\cite{miller1995delaunay} showed that the circumcircles of a Delaunay triangulation form a $k$-ply neighborhood with at most constant $k$ when the largest ratio of the circum-radius to the length of the smallest edge over all triangles is at most constant.
Therefore, we describe how to adapt our algorithm to circumcircles of a Delaunay triangulation.
We compute an $r$-way division $\varseptree$ for $\varsystem$ and use $\varseptree$ to divide $\vargraph$ by mapping each region $\varregion_i$ of $\varseptree$ to a region $\hat \varregion_i$ as follows:
Let the boundary vertices of $\hat \varregion_i$ be the vertices that are contained in a triangle whose circumcircle is a boundary ball of $\varregion_i$.
The internal vertices of $\hat \varregion_i$ are the vertices which are contained only in triangles whose circumcircles are internal balls of region $\varregion_i$.

We see that the number of vertices in $\hat \varregion_i$ is $\BigO(|\varregion_i|)$, where $|\varregion_i|$ is the number of circumcircles of $\varregion_i$.
Furthermore, the number of boundary vertices in $\hat \varregion_i$ regions is at most a constant factor larger than the number of boundary circumcircles of $\varregion_i$.
% Furthermore, the above construction results in division of $\vargraph$ into $\BigO(r)$ regions of size $\BigO(|\vargraph|/r)$ such that each region has $\BigO(\sqrt{k|\vargraph|/r})$ boundary vertices and the total number of boundary vertices is $\BigO(\sqrt{k|\vargraph|r})$.

It follows that we can also divide a TIN into regions by applying the above construction to the Delaunay triangulation used to construct the TIN.
This approach allows for division-based algorithms on TINs.
Haverkort \etal~\cite{haverkort2012grid} showed that a separator of a grid graph can be used to I/O-efficiently compute the flow accumulation of a terrain.
Their results can be applied to TINs and $r$-way divisions:
\begin{lemma}[Division-Based Flow Accumulation~\cite{haverkort2012grid}]
  Let $\terrain$ be a TIN and let $\varvertices$ denote the vertices of the TIN.
  Given a rain distribution $\varraindist : \varvertices \rightarrow \reals^+$ and an $r$-way division of $\terrain$ such that each region fits in internal memory along with the rain distribution.
  Letting $b_i$ denote the number of boundary vertices in a region $\varregion_i$ of the $r$-way division,
  the flow accumulation over $\terrain$ can be computed using $\BigO(\Scan(|\varvertices|) + \Sort(\sum b_i))$ I/Os
\label{lem:division-flow-accumulation}
\end{lemma}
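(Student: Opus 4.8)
The plan is to adapt the flow-propagation algorithm of Haverkort \etal~\cite{haverkort2012grid} to an $r$-way division by reducing the global computation to a small computation on the boundary vertices. First I would observe that, since each vertex of $\terrain$ pushes all of its water to a single downhill neighbour, the flow directions define a graph of out-degree one, and since water moves only to lower elevation this graph is acyclic; hence it is an in-forest whose topological order is exactly the order of decreasing elevation. The flow accumulation of a vertex $v$ is then the sum of $\varraindist$ over all vertices in the in-tree rooted at $v$, i.e.\ over all vertices whose flow path passes through $v$.

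The structural fact I would exploit is that the boundary vertices separate the regions: on any flow path, each maximal subpath between two consecutive boundary vertices lies entirely in the interior of a single region $\varregion_i$. This lets me reduce the problem to the boundary graph $G_B$ whose vertices are the boundary vertices and which has an edge $u \to w$ whenever the flow leaving $u$ reaches $w$ as the next boundary vertex on its path. Because the flow graph has out-degree one, each boundary vertex contributes at most one such edge, so $G_B$ has $\BigO(\sum_i b_i)$ vertices and edges and is again a DAG ordered by elevation.

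The algorithm then runs in three phases. In Phase~1 I load each region $\varregion_i$ (which fits in memory together with $\varraindist$) and, processing its vertices in decreasing-elevation order, compute for every $v \in \varregion_i$ the local accumulation $\mathrm{local}(v)$, defined as the total rain of the vertices of $\varregion_i$ that reach $v$ without first crossing a boundary vertex. Summing these local contributions over all regions sharing each boundary vertex gives a weight for every node of $G_B$, and following each boundary vertex to the next one along its flow gives the edges of $G_B$; this costs $\BigO(\Scan(|\varvertices|))$ to read the regions plus $\BigO(\Sort(\sum_i b_i))$ to assemble $G_B$. In Phase~2 I evaluate the accumulation on $G_B$ by sorting the boundary vertices by elevation and sweeping, propagating each computed value along its outgoing edge (time-forward processing); this yields $\mathrm{acc}(w) = \mathrm{local}(w) + \sum_{u \to w} \mathrm{acc}(u)$ for every boundary vertex $w$ in $\BigO(\Sort(\sum_i b_i))$ I/Os. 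In Phase~3 I reload each region, initialise every boundary vertex with its Phase-2 value $\mathrm{acc}(u)$ and every interior vertex with its rain, and re-run the in-memory propagation, crucially treating boundary vertices as fixed sources that emit but do not absorb, so that interior mass already counted inside $\mathrm{acc}(u)$ is not counted twice. This recovers $\mathrm{acc}(v)$ for all interior vertices in $\BigO(\Scan(|\varvertices|) + \Scan(\sum_i b_i))$ I/Os, and summing the three phases gives the claimed $\BigO(\Scan(|\varvertices|) + \Sort(\sum_i b_i))$ bound.

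The step I expect to be the main obstacle is the bookkeeping that guarantees every unit of rain is counted exactly once across region boundaries: the decomposition $\mathrm{acc}(v) = \mathrm{local}(v) + \sum_{u} \mathrm{acc}(u)$ must range over precisely the boundary vertices $u$ that are the first boundary vertices met going upstream from $v$ inside $\varregion_i$, and the emit-but-do-not-absorb rule of Phase~3 is exactly what makes this consistent with the $G_B$ evaluation of Phase~2. A secondary technicality is flat terrain, where elevation is not strictly decreasing; as in \cite{haverkort2012grid} I would fix a global tie-breaking order on the vertices so that the flow graph stays acyclic and all three phases share a single topological order.
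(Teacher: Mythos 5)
The paper does not actually prove this lemma---it is imported verbatim from Haverkort \etal~\cite{haverkort2012grid} and used as a black box---so there is no in-paper argument to compare against; your three-phase reduction (per-region local accumulation, time-forward processing on the boundary graph $G_B$ sorted by elevation, then a second in-memory pass treating boundary vertices as emit-only sources) is precisely the standard argument behind that citation and yields the stated $\BigO(\Scan(|\varvertices|) + \Sort(\sum_i b_i))$ bound. The one detail worth pinning down explicitly is that when you sum the per-region local contributions at a boundary vertex $u$ shared by several regions, the rain $\varraindist(u)$ of $u$ itself must be charged to exactly one of those regions---this is the same single-counting issue you already flag as the main bookkeeping obstacle, and it is resolved by the partition of upstream vertices according to the last boundary vertex on their flow path before $u$.
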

The flow accumulation over a TIN can also be computed using an algorithm that performs a top-down sweep of the terrain~\cite{haverkort2012grid}.
The result is stated in the following lemma:
\begin{lemma}[Sweep-Based Flow Accumulation~\cite{haverkort2012grid}]
  Let $\terrain$ be a TIN and let $\varvertices$ denote the vertices of the TIN.
  Given a rain distribution $\varraindist : \varvertices \rightarrow \reals^+$ that fits in internal memory, the flow accumulation of $\terrain$ can be computed using $\BigO(\Sort(|\varvertices|))$ I/Os.
  \label{lem:sweep-flow-accumulation}
\end{lemma}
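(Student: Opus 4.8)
The plan is to exploit the fact that the given flow directions route water strictly downhill, so they induce a directed acyclic graph on $\varvertices$ in which every vertex has out-degree at most one (its flow direction) and water only ever moves from a vertex to one handled later in a top-down sweep. I would process the vertices in an order that is a topological order of this flow DAG --- concretely, in order of \emph{decreasing} elevation --- so that by the time a vertex $v$ is reached, every vertex that pushes water into $v$ has already been processed and its contribution is waiting to be collected. The water that still has to be delivered is kept in an I/O-efficient priority queue, which lets me route each contribution to its destination without random access to external memory.

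First I would sort $\varvertices$ by decreasing elevation and assign each vertex a \emph{rank}, breaking ties consistently so that the rank order is a linear extension of the flow DAG; this costs $\BigO(\Sort(|\varvertices|))$ I/Os. Next I would translate, for each vertex $v$, the reference to its downhill neighbor $w$ into the rank of $w$; since each vertex stores a single neighbor reference, this is a standard join carried out by a constant number of sorts and scans, again in $\BigO(\Sort(|\varvertices|))$ I/Os. The result is a list $L$ of pairs $(\operatorname{rank}(v), \operatorname{rank}(w))$ sorted by $\operatorname{rank}(v)$.

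I would then sweep over $L$ in increasing rank order while maintaining a priority queue $Q$ keyed by destination rank. When the sweep reaches the vertex $v$ of rank $\rho$, I extract from $Q$ every entry whose key equals $\rho$; these are exactly the water contributions pushed to $v$ by already-processed (higher) vertices. Summing them and adding $\varraindist(v)$ --- which incurs no I/O, since the rain distribution fits in internal memory --- yields the flow accumulation $f(v)$, which is reported. Finally I insert the pair $(\operatorname{rank}(w), f(v))$ into $Q$, routing $v$'s outgoing water to its downhill neighbor $w$ (a sink with no downhill neighbor simply inserts nothing). Because $w$ has strictly larger rank than $v$, every insertion targets a vertex not yet processed, so the invariant that all of $v$'s incoming water is present in $Q$ when $v$ is reached is preserved throughout the sweep.

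For the I/O bound, each flow edge generates one insertion and the corresponding extraction, and there are at most $|\varvertices|$ flow edges, so the sweep performs $\BigO(|\varvertices|)$ priority-queue operations; using an I/O-efficient priority queue, each costs $\BigO\big(\tfrac{1}{B}\log_{M/B}(|\varvertices|/B)\big)$ amortized I/Os, for a total of $\BigO(\Sort(|\varvertices|))$, which together with the sorts and joins gives $\BigO(\Sort(|\varvertices|))$ overall. I expect the main obstacle to be the correctness argument rather than the I/O accounting: one must guarantee that the processing order is genuinely a topological order of the flow graph, which requires handling flat regions and elevation ties so that the given flow directions never close a cycle. I would address this by fixing a total order refining elevation that is consistent with the flow directions and using it to break ties when assigning ranks, so that every flow edge goes from a smaller rank to a larger one.
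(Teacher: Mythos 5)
Your proposal is correct and is essentially the approach the paper relies on: the lemma is quoted from Haverkort~\etal, and the paper's own description of it is precisely the downward elevation sweep with an I/O-efficient priority queue (time-forward processing along the flow DAG) that you spell out, with the same $\BigO(\Sort(|\varvertices|))$ accounting for the sorts, the rank join, and the $\BigO(|\varvertices|)$ priority-queue operations. Your extra care about tie-breaking on flat regions so that the rank order is a genuine linear extension of the flow directions is a reasonable refinement but does not change the argument.
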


\section{Experiments}
\label{sec:experiments}
In this section, we present the experiments we have conducted on real terrain data to demonstrate the efficiency of our algorithms.
We have implemented and tested the $r$-way division algorithm for Delaunay triangulations as described in \rsec{terrain} and \rthm{recursive-case}.
Furthermore, we have implemented and tested the two flow accumulation algorithms stated in \rlem{division-flow-accumulation} and \rlem{sweep-flow-accumulation}.
The algorithms have been implemented in C++ and make heavy use of the TPIE library~\cite{arge2017tpie} which provides implementations of fundamental I/O-efficient algorithms such as sorting and priority queues.
The experiments were run on an Intel i7-3770 CPU with $32 \gb$ of RAM and $28 \tb$ of storage running Linux.
Each program was assigned $25 \gb$ of memory during testing.

For our tests, we used the Danish Elevation Model published by the Danish Agency for Data Supply and Efficiency~\cite{sdfe2021dem}.
The model consists of a highly detailed point cloud collected using LiDAR technology.
Each point in the model is labeled as ground, vegetation, a building rooftop, and others.
There is an average of 4.5 points per square meter, however, this varies for each area of the terrain since non-reflective surfaces, such as certain types of vegetation, can result in large holes in the point cloud.
In this paper, we focus on modeling the flow of water and, thus, we filtered out points not labeled as ground or building points.
We constructed a TIN from the resulting point cloud using a Delaunay triangulation.

We have implemented \rthm{miller-separator} to compute sphere separators as described by Miller \etal~\cite{miller1997separator} and Clarkson \etal~\cite{clarkson1993radon}.
Whenever we sample sphere separators on input $\vargraph$, we discard a separator $\varclassifier$ if it does not satisfy that $|\vargraph(\varclassifier_\leq)| \geq 1/4$ and $|\vargraph(\varclassifier_\geq)| \geq 1/4$, where $|\vargraph|$ is the number of triangles in $\vargraph$, $|\vargraph(\varclassifier_\leq)|$ is the number of triangles inside or intersecting $\varclassifier$, and $|\vargraph(\varclassifier_\geq)|$ is the number of triangles outside or intersecting $\varclassifier$.
We note that this differs from previous sections, where we computed the separator based on the circumcircles.
However, since the number of intersected circumcircles upper bounds the number of intersected triangles, the previous proofs still apply to this adaptation.
Furthermore, this simplifies the algorithm by avoiding the computation of circumcircles.
In order to examine the size of $|\vargraph(\varclassifier_=)|$, we sampled a large number of sphere separators on various areas of the terrain and computed the number of intersected triangles.
The results are shown in \rapd{appendix-separator-stats} and demonstrate that $|\vargraph(\varclassifier_=)|$ is small in practice.

Next, we assessed our $r$-way division algorithm (\rthm{recursive-case}) and the flow accumulation algorithms described in \rlem{division-flow-accumulation} and \rlem{sweep-flow-accumulation}.
We evaluated our implementation on a TIN representing a $3500 \kmsquared$ area around Herning, Denmark.
The TIN is represented as a list of triangles, such that each vertex of a triangle is annotated with its coordinates, an index, and the flow direction of the vertex.
The flow directions have been computed by selecting the neighboring vertex with minimum height.
This resulted in a TIN with $~23.3$ billion points and a total size of $6.2 \tb$.
In order to measure the I/O throughput of the system, we implemented a simple program that reads the entire TIN and observed that the program took $2$ hours and $56$ minutes to run on the TIN.

We computed a multiway division on the input such that our implementation of division-based flow accumulation (\rlem{division-flow-accumulation}) can fit each region in memory. %$544451361$.
The division was saved to disk by representing each region as a file containing a list of triangles.
Each triangle is annotated with the same information as the input as well as a boolean variable indicating whether the triangle is a boundary triangle or not.
This resulted in a division with $131$ regions and only $9\,394\,816$ boundary vertices in total.
Furthermore, $\sum b_i = 17\,257\,717$, where $b_i$ is the number of boundary vertices in region $\varregion_i$.
Thus, we see that the number of boundary vertices is very small in practice despite having no theoretical bounds for this algorithm.
The computation of this multiway division took $19$ hours and $23$ minutes.
We implemented a program that reads the output division and writes a dummy division with the same number of regions and the same number of triangles in each region.
This program took $3$ hours and $16$ minutes to read the division and $8$ hours and $12$ minutes to write the dummy division.
In other words, we see that a large proportion of the running time is due to computation in internal memory.

Having computed the division, we applied it to compute the flow accumulation over the TIN using the division-based algorithm (\rlem{division-flow-accumulation}).
In our implementation, we used the uniform rain distribution that distributes one unit of water on each vertex.
In this setup, our implementation of division-based flow accumulation took $13$ hours and $59$ minutes when given the division as input.
We compared this to an implementation of \rlem{sweep-flow-accumulation} which took $20$ hours and $18$ minutes when running on the same input TIN.
Thus, we see a significant improvement in running time when the terrain has been preprocessed by computing the multiway division.
In other words, our approach improves performance when computing flow accumulation for different rain distributions on the same TIN.
However, the division-based approach is slower when computing the flow accumulation for only a single rain distribution, and the time spent preprocessing is included.

\bibliography{references}
\appendix
\section{Proof of Constant VC dimension}
\label{sec:appendix-vc-proof}
In this section, we present the proof of \rlem{vc-circle}.
\begin{proof}
  We begin by proving that the VC dimension of $(\vardisks, \varclassifiers_\geq)$ is constant.
  Let $\varclassifier_\geq \in \varclassifiers_\geq$ denote a classifier corresponding to the separator sphere $\varclassifier$ with center $(x_\varclassifier, y_\varclassifier)$ and radius $r_\varclassifier$.
  Let $d \in \vardisks$ be a disk in the plane with center $(x_d, y_d)$ and radius $r_d$.
  We can express $d \in \vardisks(\varclassifier_\geq)$ as follows:
  \begin{align*}
    d \in \vardisks(h_\geq)
      &\Leftrightarrow \sqrt{(x_\varclassifier - x_d)^2 + (y_\varclassifier - y_d)^2} \geq r_\varclassifier - r_d \\
      &\Leftrightarrow r_d \geq r_\varclassifier \vee (x_\varclassifier - x_d)^2 + (y_\varclassifier - y_d)^2 \geq (r_\varclassifier - r_d)^2 \;.
  \end{align*}
  % Observe that $d \in h_\geq(\vardisks)$ is expressed as a union of a linear function and a quadratic function.
  We now map $d$ to $\reals^7$ by mapping the monomials $x_d^iy_d^jr_d^k$ to variables $z_{i}$ as follows:
  \begin{align*}
    \big \langle z_{1}, z_{2},z_{3}, z_{4}, z_{5}, z_{6}, z_{7} \big \rangle = \big \langle x_d, y_d, x_dy_d, x_d^2, y_d^2, r_d, r_d^2 \big \rangle\;.
  \end{align*}
  % It follows, that there is a mapping of elements of $\vardisks$ into a range space with the ground set $\reals^7$ where the classifiers are unions of two linear functions in $\reals^7$.
  % Thus, $\varclassifier_\geq(x_d, y_d, r_d)$ is mapped to $\hat \varclassifier_\geq(z_1, \ldots, z_7)$ which is a union of two halfspaces in $\reals^7$.
  % It follows that a classifier in $\varclassifiers_\geq$ can be represented as a union of two halfspaces in $\reals^7$.
  Assume there is a finite subset $Y \subset \vardisks$ that is shattered by $\varclassifiers_\geq$.
  It follows that $Y$ maps to a subset $\hat Y \subset \reals^7$ where $|Y| = |\hat Y|$ and $\hat Y$ is shattered by $\varclassifiers_\cup = \{ h_1 \cup h_2 \mid h_1, h_2 \in \varclassifiers_\textit{halfspace} \}$, where $\varclassifiers_\textit{halfspace}$ is the set of all halfspaces in $\reals^7$.
  % then  maps to a subset of $\reals^7$ with the same cardinality that can be shattered by the union of two halfspaces in $\reals^7$.
  % We now see that this maps the quadratic part of $\varclassifier_\geq(x_c, y_c, r_c)$ to the function $\hat{h}_\geq(z_1, \ldots, z_7)$ as follows: \fxnote{you left of the linear part}
  % \begin{align*}
  %   &(x_s - x_c)^2 + (y_s - y_c)^2 \geq (r_s - r_c)^2 \\
  %   \Leftrightarrow&x_s^2 + x_c^2 - 2x_sx_c + y_s^2 + y_c^2 - 2y_sy_c - r_s^2 - r_c^2 + 2r_sr_c \geq 0 \\
  %   \Leftrightarrow&z_{4} - 2x_sz_{1} + z_{5} - 2y_sz_{2} - z_{7} + 2r_sz_{6} + (x_s^2 + y_s^2 - r_s^2) \geq 0
  % \end{align*}
  % We now see that the function .
  Thus, the VC dimension of $(\vardisks, \varclassifiers_\geq)$ is at most the VC dimension of $(\reals^7, \varclassifiers_\cup)$.
  It follows from \rlem{vc-halfspace} and \rlem{vc-compose} that the VC dimension of $(\vardisks, \varclassifiers_\geq)$ is constant.
  The VC dimension of $(\vardisks, \varclassifiers_\leq)$ can be bounded correspondingly.
\end{proof}

\section{Bound on the Total Number of Boundary Balls}
\label{sec:appendix-sample-boundary}
In this section, we provide an upper bound for the function $b(|\varsamplesystem_i|)$ introduced in \rsec{algorithm:total-boundary}. Recall the definition of $b(|\varsamplesystem_i|)$:
\begin{equation*}
    b(|\varsamplesystem_i|) \leq \begin{cases}
      b \left(\beta |\varsamplesystem_i|\right) + b \left((1-\beta) \cdot |\varsamplesystem_i| + c_1 \sqrt{k|\varsamplesystem_i|}\right) + c_1 \sqrt{k|\varsamplesystem_i|} \quad &\text{if}\;|\varsamplesystem_i| > a  \\
      0 \quad &\text{if}\; \frac {1}{12} \cdot a \leq |\varsamplesystem_i| \leq a \\
     \end{cases}
\end{equation*}
We proceed to show that $b(|\varsamplesystem_i|) \leq \frac{c_4 |\varsamplesystem_i| \sqrt {k}}{\sqrt a} - c_5 \sqrt{k|\varsamplesystem_i|}$, where $c_5 = 5 \cdot (c_1 + 1/100)$ and $c_4 = c_5 \cdot \sqrt{12}$.
We prove this by induction in the size of $\varsamplesystem_i$.
Letting $|\varsamplesystem_i| \geq \frac{1}{12} \cdot a$, it follows that $\frac{c_4 |\varsamplesystem_i| \sqrt {k}}{\sqrt a} - c_5 \sqrt{k|\varsamplesystem_i|} \geq 0$ since $c_5 \leq c_4\sqrt{1/12}$.
This proves the base case.
We proceed with proving the induction step using the induction hypothesis:
\begin{align*}
b(|\varsamplesystem_i|)
  &\leq b \Big (\beta |\varsamplesystem_i| \Big) + b \Big((1-\beta) \cdot |\varsamplesystem_i| + c_1\sqrt{k|\varsamplesystem_i|}\Big) + c_1 \sqrt{k |\varsamplesystem_i|} \\
  &\leq \frac{c_4 \left(|\varsamplesystem_i| + c_1\sqrt{k|\varsamplesystem_i|}\right) \sqrt {k}}{\sqrt a} - c_5 \sqrt{|\varsamplesystem_i| k \beta} - c_5 \sqrt{|\varsamplesystem_i| k (1 - \beta)} + c_1 \sqrt{k |\varsamplesystem_i|} \\
  &\leq \frac{c_4|\varsamplesystem_i| \sqrt {k}}{\sqrt a} - \sqrt {k|\varsamplesystem_i|} \left (c_5 \sqrt \beta + c_5 \sqrt{1 - \beta} - c_1 - \frac{c_4c_1 \sqrt {k}}{\sqrt a} \right) \\
  &\leq \frac{c_4|\varsamplesystem_i| \sqrt {k}}{\sqrt a} - c_5 \sqrt {k|\varsamplesystem_i|} \left (\sqrt \beta + \sqrt{1 - \beta} - \frac{c_1}{c_5} - \frac{c_4c_1 \sqrt{k}}{c_5 \sqrt a} \right)\;.
\end{align*}
Recall the assumption that $k \leq |\varsamplesystem|/\hat r$ and that $a= c \frac{|\varsamplesystem|}{\hat r}$. It follows that $k/a \leq 1/c$.
Furthermore, since $(1/12) \leq \beta \leq (11/12)$ it follows that $\sqrt \beta + \sqrt {1-\beta} \geq \frac 65$.
We now rewrite as follows:
\begin{align*}
  b(|\varsamplesystem_i|)
    &\leq \frac{c_4|\varsamplesystem_i| \sqrt {k}}{\sqrt a} - c_5 \sqrt {k |\varsamplesystem_i|} \left (\frac 65 - \frac{c_1}{c_5} - \frac{c_4c_1 \sqrt{k/a}}{c_5} \right) \\
    &\leq \frac{c_4|\varsamplesystem_i| \sqrt{k}}{\sqrt a} - c_5 \sqrt {k |\varsamplesystem_i|} \left (\frac 65 - \frac{c_1 + c_4c_1 \sqrt {1/c}}{c_5} \right) \;.
\end{align*}
Recall that the recursion produces regions of size at most $a = c \frac{|\varsamplesystem|}{\hat r}$, where $c>0$ is a constant.
By inserting $c_5 = 5 \cdot (c_1 + 1/100)$ and setting $c$ sufficiently large such that $c_4c_1\sqrt{1/c} \leq 1/100$, we obtain
\begin{align*}
  b(|\varsamplesystem_i|)
    &\leq \frac{c_4|\varsamplesystem_i| \sqrt{k}}{\sqrt a} - c_5 \sqrt {k |\varsamplesystem_i|} \left (\frac 65 - \frac{c_1 + \frac{1}{100}}{5 \cdot (c_1 + \frac{1}{100})} \right) \\
    &= \frac{c_4|\varsamplesystem_i| \sqrt{k}}{\sqrt a} - c_5 \sqrt {k |\varsamplesystem_i|}\;.
\end{align*}
This completes the proof by induction.
Thus, the total number of intersections when recursing $\varsamplesystem$ is at most $b(|\varsamplesystem_i|) \leq \frac{c_4|\varsamplesystem_i| \sqrt {k}}{\sqrt a} = \BigO \big(\sqrt{k|\varsamplesystem_i|\hat r} \big)$.

\section{Experimental Evaluation of Separator Size}
\label{sec:appendix-separator-stats}
In this section, we examine the number of intersections when computing sphere separators on a terrain represented as a TIN.
As previously described in \rsec{experiments}, we have implemented \rthm{miller-separator} as described by Miller
\etal~\cite{miller1997separator} and Clarkson \etal~\cite{clarkson1993radon}.
We apply the implementation to various areas of a TIN constructed from the Danish Elevation model including only ground and building points.
Let $\vargraph$ denote the input TIN and let $|\vargraph|$ denote the number of triangles in $\vargraph$.
For each area we sample $250$ sphere separators such that each sphere separator $\varclassifier$ satisfies $|\vargraph(\varclassifier_\leq)| \geq 1/4$ and $|\vargraph(\varclassifier_\geq)| \geq 1/4 \cdot |\vargraph|$, where $|\vargraph|$ is the number of triangles in $\vargraph$, $|\vargraph(\varclassifier_\leq)|$ is the number of triangles inside or intersecting $\varclassifier$, and $|\vargraph(\varclassifier_\geq)|$ is the number of triangles outside or intersecting $\varclassifier$.

% Note, that we examine the number of intersected triangles and not the number of intersected circumcircles.
% Even though the proof decribes the number of intersected circumcircles, in practice it is sufficient to compute intersections of triangles.
% Furthermore, the number of intersected triangles is upper bounded by the number of intersected circumcircles.

\begin{figure}[h]
  % ACM sigconf style
  %\showthe\columnwidth  % ACM2017: 241.14749pt.
  %\includegraphics{fig/terrain/terrain.pdf}%

  \begin{minipage}{0.33\linewidth}%
    \centering
    \includegraphics[width=1\linewidth]{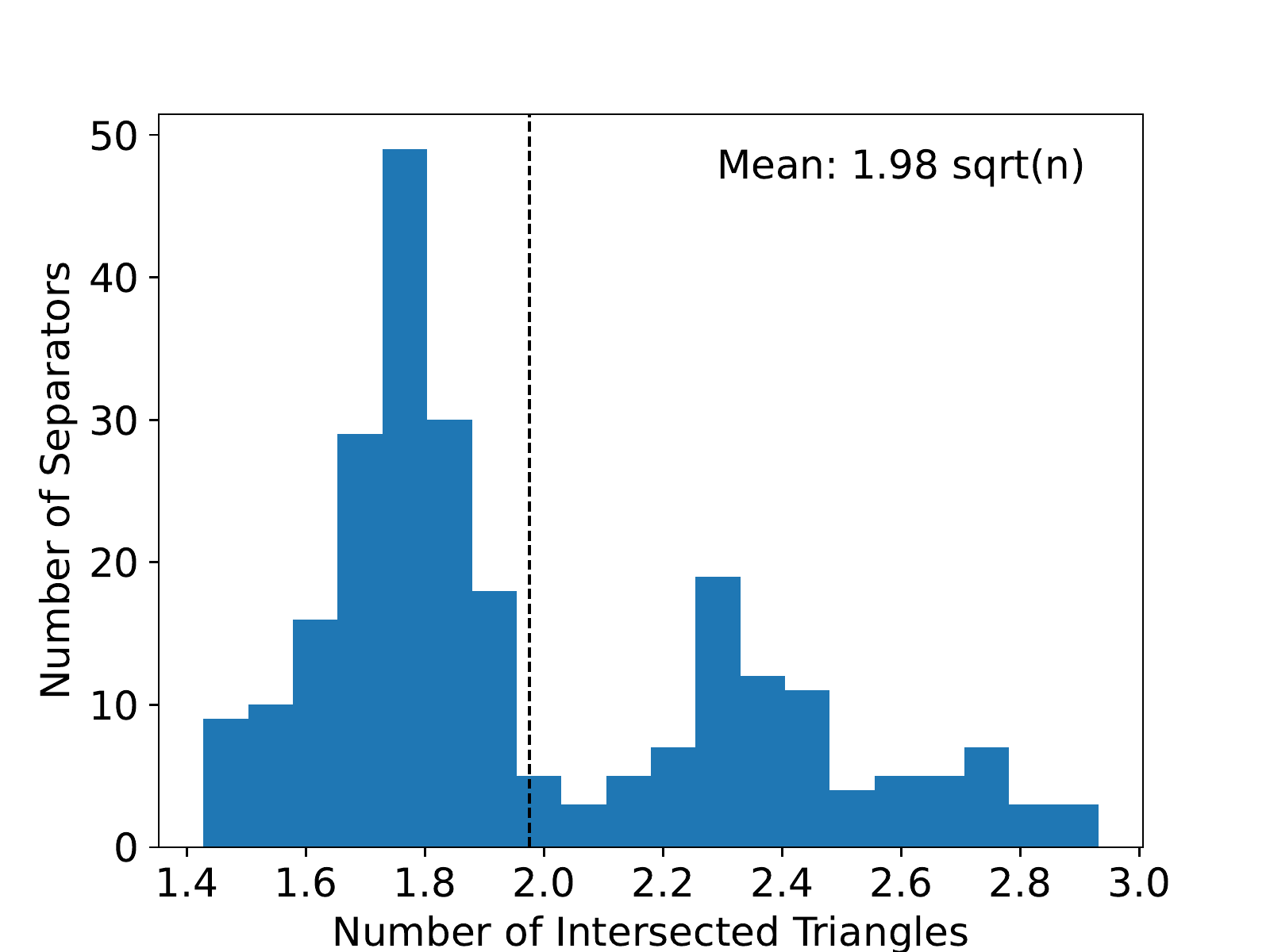}
    \subcaption{%
      Copenhagen (City)
      \label{fig:intersect-city}%
    }%
  \end{minipage}%
  \begin{minipage}{0.33\linewidth}%
    \centering
    \includegraphics[width=1\linewidth]{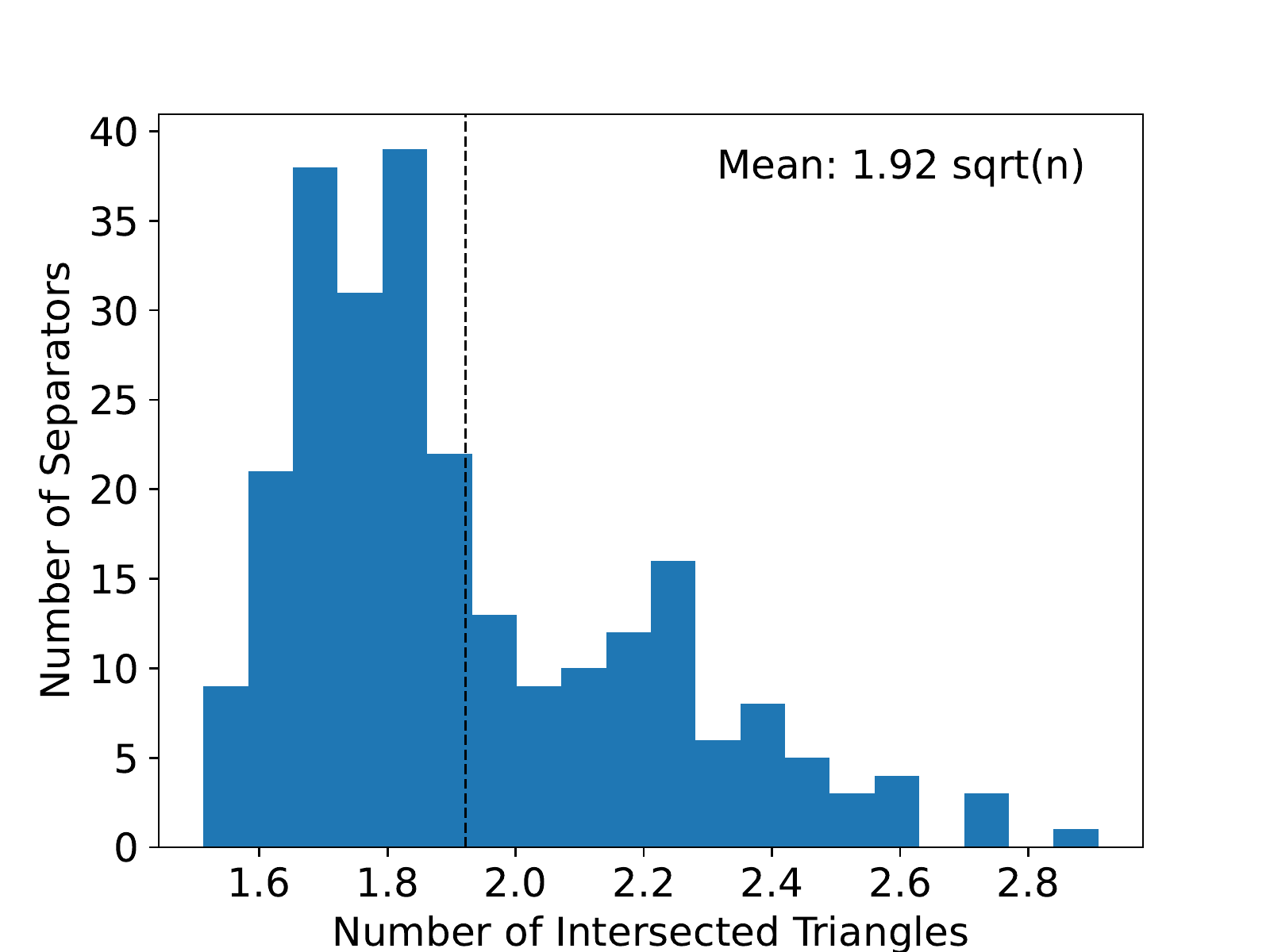}
    \subcaption{%
      Fjeldstervang (Fields)
      \label{fig:intersect-field}%
    }%
  \end{minipage}%
  \begin{minipage}{0.33\linewidth}%
    \centering
    \includegraphics[width=1\linewidth]{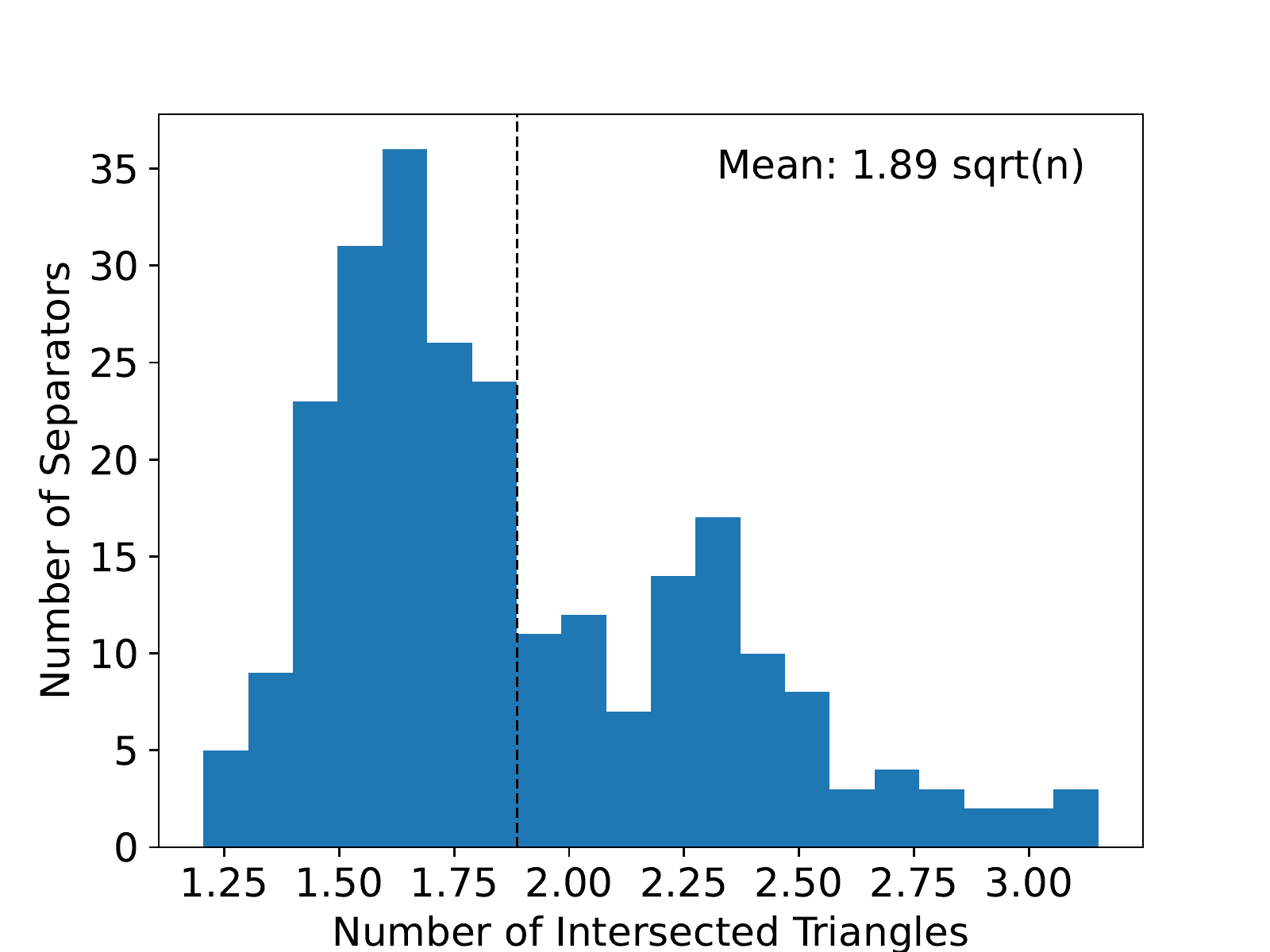}
    \subcaption{%
      Gammel Rye (Forest)
      \label{fig:intersect-forest}%
    }%
  \end{minipage}%
  \caption{%
    \normalfont
    Evaluating the number of intersected triangles on various types of terrain. Each area is $10 \km$ by $10 \km$
    % \protect(\subref{fig:intersect-city})~City
    % \protect(\subref{fig:intersect-field})~Field
    % \protect(\subref{fig:intersect-forest})~Forest
  }
  \label{fig:intersection-types}
  % PUNKTSKY_621_54_TIF_UTM32-ETRS89.zip % Gammel Rye
  % PUNKTSKY_617_71_TIF_UTM32-ETRS89.zip % Copenhagen
  % PUNKTSKY_621_48_TIF_UTM32-ETRS89.zip % Fjeldstervang
\end{figure}

\begin{table}[h]
  \begin{center}
  \begin{tabular}{llrlrlr}
                      & \multicolumn{2}{c}{Copenhagen (city)} & \multicolumn{2}{c}{Fjeldstervang (fields)} & \multicolumn{2}{c}{Gammel Rye (forest)} \\ \hline
    Total points      & 1\,737\,352\,542 && 1\,016\,877\,728 && 1\,380\,507\,284 \\
    Ground            & 1\,003\,915\,809 &(57.78\%) & 726\,748\,491 &(71.47 \%) & 514\,504\,787 &(37.27 \%) \\
    Building          & 193\,699\,578 &(11.15\%) & 6\,462\,467 &(0.64 \%) & 8\,164\,875 &(0.59 \%) \\
    Low veg.    & 17\,072\,550 &(0.98\%) & 28\,417\,823 &(2.79 \%) & 23\,066\,640 &(1.67 \%) \\
    Medium veg. & 93\,116\,564 &(5.35\%) & 41\,787\,026 &(4.11 \%) & 66\,254\,550 &(4.80 \%) \\
    High veg.   & 379\,489\,187 &(21.84\%) & 210\,080\,703 &(20.66 \%) & 735\,216\,819 &(53.26 \%) \\
    Water             & 3\,619\,895 &(0.21\%) & 1\,746\,674 &(0.17 \%) & 29\,697\,070 &(2.15 \%) \\
  \end{tabular}
  \end{center}
  \caption{Description of point cloud labels for tiles of size $10 \km$ by $10 \km$. The most frequent point labels have been included.}
  \label{tab:intersection-types-point-distribution}
\end{table}
% 1737352542 & 1016877728 & 1380507284 \\ \hline
% 1003915809 (57.78\%) & 726748491 (71.47 \%) & 514504787 (37.27 \%) \\ \hline
% 17072550 (0.98\%) & 28417823 (2.79 \%) & 23066640 (1.67 \%) \\ \hline
% 93116564 (5.35\%) & 41787026 (4.11 \%) & 66254550 (4.80 \%) \\ \hline
% 379489187 (21.84\%) & 210080703 (20.66 \%) & 735216819 (53.26 \%) \\ \hline
% 193699578 (11.15\%) & 6462467 (0.64 \%) & 8164875 (0.59 \%) \\ \hline
% 3619895 (0.21\%) & 1746674 (0.17 \%) & 29697070 (2.15 \%) \\ \hline
  % Others & 9.8$\times10^5$ & 1.6$\times10^6$   & 3.0$\times10^7$     \\ \hline
% Copenhagen

We first examine how the separator size changes when the type of terrain changes.
For this, we examine three different $10 \km$ by $10 \km$ tiles of the terrain.
The first tile is an area from the city of Copenhagen which has a high frequency of building points.
The second tile is an area centered on the town of Gammel Rye which has a high frequency of points labeled as high vegetation.
Finally, the third tile is centered on the town of Fjeldstervang which has a high frequency of ground points describing agricultural fields.
We have described the distribution of point labels in \rtab{intersection-types-point-distribution}.
For each tile, we have plotted a histogram describing the number of intersections of sphere separators.
This is shown in \rfig{intersection-types}.
We compute the mean number of intersections for each area and normalize the result by dividing by the square root of the number of points.
We observe that the mean number of intersections is between $1.89 \sqrt n$ and $1.98 \sqrt n$ for the tiles tested.
Thus, in the examined areas, the number of intersections is on average $c \sqrt n$ for a small constant $c$.

\begin{figure}[h]
  % ACM sigconf style
  %\showthe\columnwidth  % ACM2017: 241.14749pt.
  %\includegraphics{fig/terrain/terrain.pdf}%

  \begin{minipage}{0.33\linewidth}%
    \centering
    \includegraphics[width=1\linewidth]{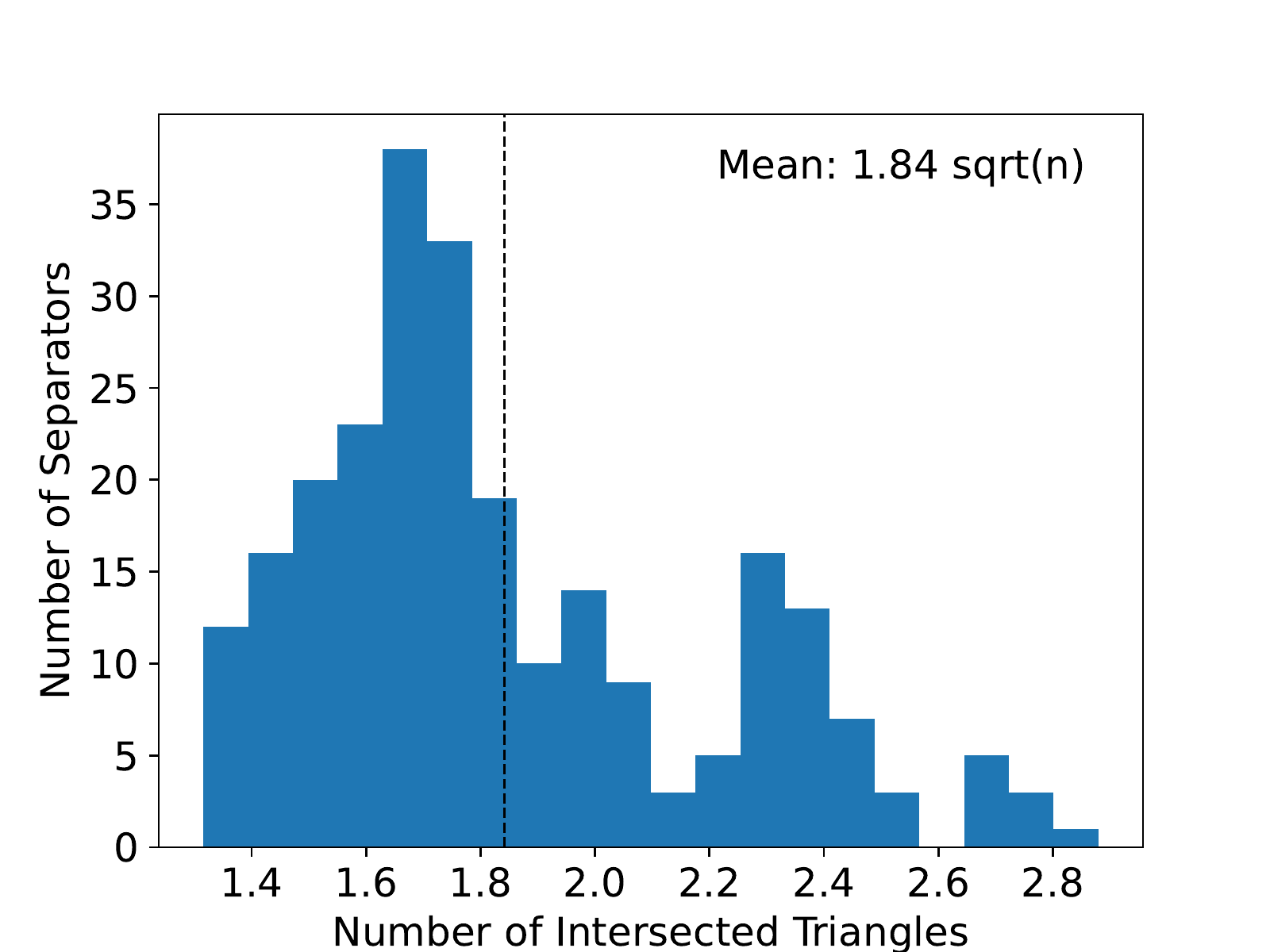}
    \subcaption{%
      A $100 \kmsquared$ tile.
      \label{fig:intersect-1x1}%
    }%
  \end{minipage}%
  \begin{minipage}{0.33\linewidth}%
    \centering
    \includegraphics[width=1\linewidth]{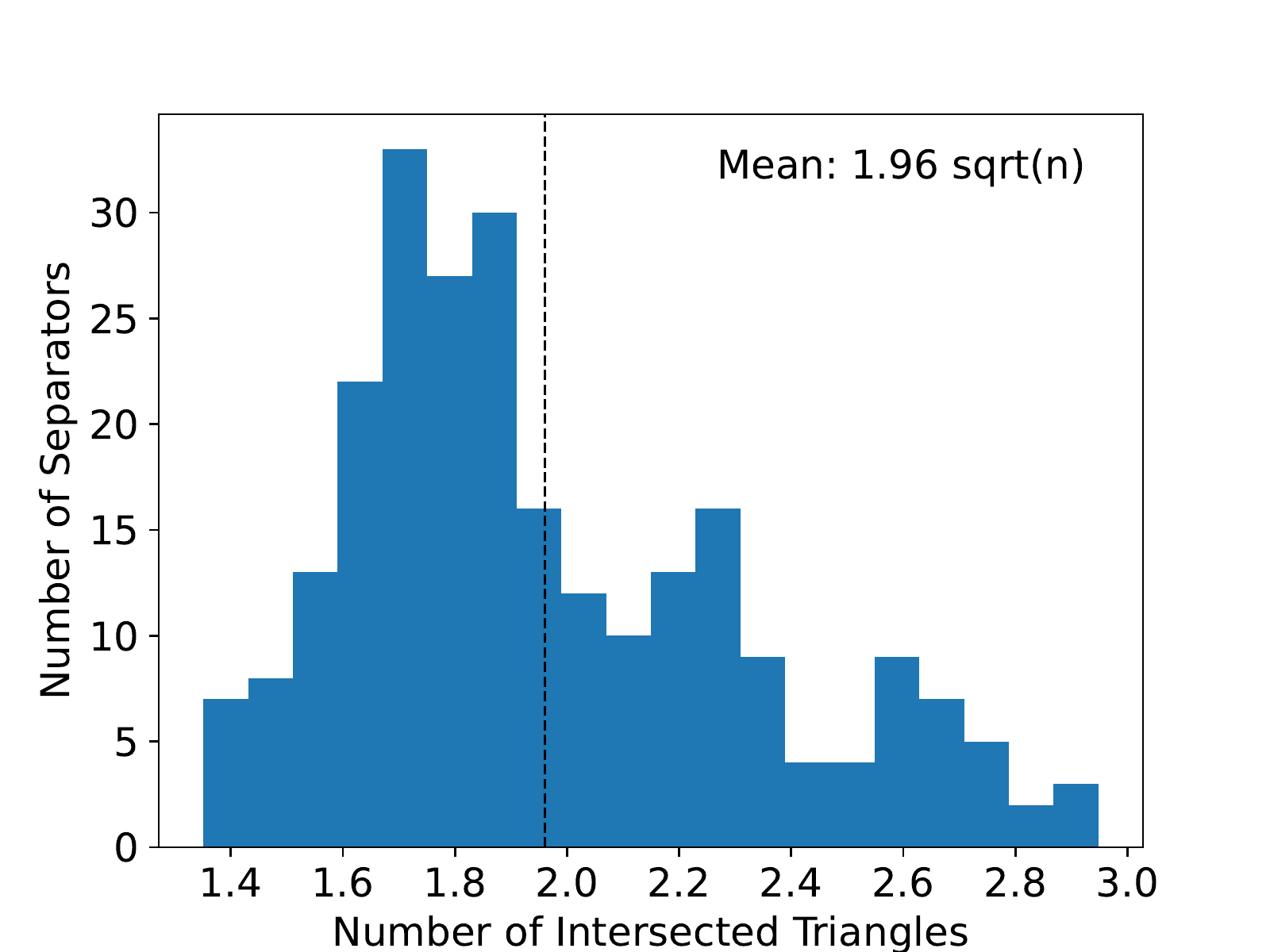}
    \subcaption{%
      A $400 \kmsquared$ tile.
      \label{fig:intersect-2x2}%
    }%
  \end{minipage}%
  \begin{minipage}{0.33\linewidth}%
    \centering
    \includegraphics[width=1\linewidth]{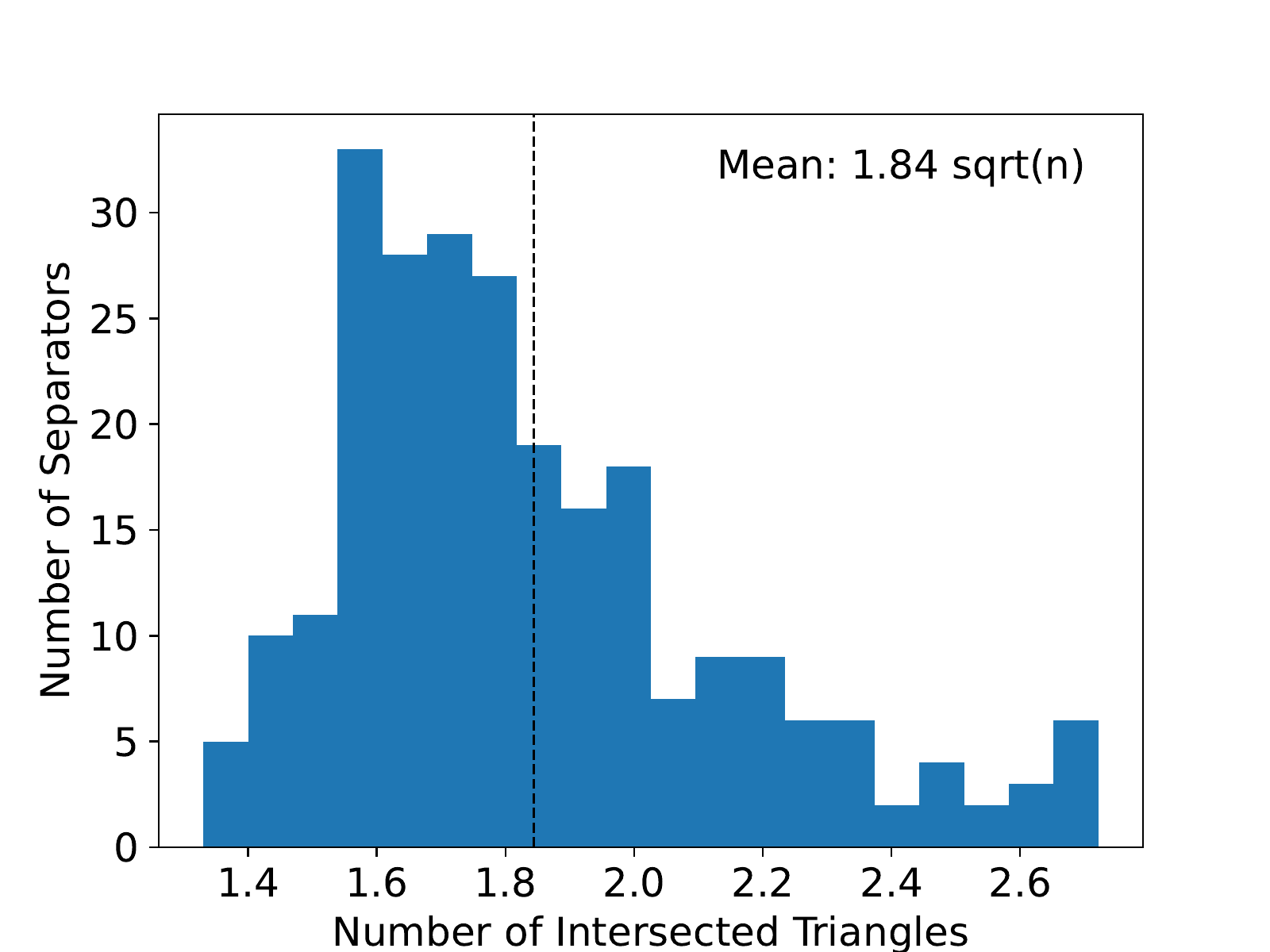}
    \subcaption{%
      A $900 \kmsquared$ tile.
      \label{fig:intersect-3x3}%
    }%
  \end{minipage}%
  \caption{%
    \normalfont
    Evaluating the number of intersected triangles when increasing total area.
    % \protect(\subref{fig:io-sweep-line})~Merge tree with red and blue edges crossing the sweep line into $\alpha$ and $\beta$ respectively.
    % \protect(\subref{fig:io-contour})~Contour of the terrain with the corresponding edges drawn.
  }
  \label{fig:intersection-growth}
\end{figure}

\begin{table}[h]
  \begin{center}
  \begin{tabular}{llrlrlr}
                                      & \multicolumn{2}{c}{$100 \kmsquared$ Tile} & \multicolumn{2}{c}{$400 \kmsquared$ Tile} & \multicolumn{2}{c}{$900 \kmsquared$ Tile} \\ \hline
    Total points / $100 \kmsquared$ & \multicolumn{2}{l}{11\,841\,409}          & \multicolumn{2}{l}{10\,088\,500} & \multicolumn{2}{l}{10\,120\,894} \\
    Ground / $100 \kmsquared$       & 6\,733\,499   &(56.86\%)  & 6\,713\,922 &(66.55\%) & 6\,681\,457 &(66.02\%) \\
    Building / $100 \kmsquared$     & 114\,970    &(0.97\%)   & 202\,454  &(2.01\%) & 124\,327 &(1.23\%) \\
    Low veg. / $100 \kmsquared$     & 293\,570    &(2.48\%)   & 188\,550  &(1.87\%) & 204\,808 &(2.02\%) \\
    Medium veg. / $100 \kmsquared$ & 547\,759    &(4.63\%)   & 428\,343  &(4.25\%) & 418\,139 &(4.13\%) \\
    High veg. / $100 \kmsquared$    & 4\,103\,293   &(34.65\%)  & 2\,493\,932 &(24.72\%) & 2\,645\,132 &(26.14\%) \\
    Water / $100 \kmsquared$        & 19\,731     &(0.17\%)   & 30\,524   &(0.30\%) & 25\,666 &(0.25\%) \\
  \end{tabular}
  \end{center}
  \caption{Description of point cloud labels for increasing tile sizes. The most frequent point labels have been included.}
  \label{tab:intersection-growth-point-distribution}
\end{table}
% Copenhagen
% Total Points      & 1184140856 & 4035400144 & 9108804802 \\ \hline
% Ground            & 673349879 (56.86\%) & 2685568736 (66.55\%) & 6013311117 (66.02\%) \\ \hline
% Building          & 11497030 (0.97\%) & 80981442 (2.01\%) & 111894220 (1.23\%) \\ \hline
% Low Vegetation    & 29357037 (2.48\%) & 75419973 (1.87\%) & 184326777 (2.02\%) \\ \hline
% Medium Vegetation & 54775924 (4.63\%) & 171337294 (4.25\%) & 376324642 (4.13\%) \\ \hline
% High Vegetation   & 410329349 (34.65\%) & 997572713 (24.72\%) & 2380619517 (26.14\%) \\ \hline
% Water             & 1973095 (0.17\%) & 12209402 (0.30\%) & 23099337 (0.25\%) \\ \hline

Next, we examine how the number of intersections changes as the total area increases.
For this, we have conducted an experiment where we create 3 tiles of increasing sizes near the city of Herning.
% The tiles are near the city of Herning and square tiles of with areas $100 \kmsquared$, $400 \kmsquared$ and $900 \kmsquared$.
% The tiles are created such that each tile is contained in the tiles with larger sizes.
When creating the tiles, we attempted to keep the number of ground and building points per $100 \kmsquared$ constant.
However, it is very difficult to keep them constant in practice due to the nature of the data.
Therefore, we expect that this will affect the number of intersections to some extend.
The distribution of point labels is described in \rtab{intersection-growth-point-distribution}.
The number of intersections for each tile is shown in \rfig{intersection-growth}.
As expected, we see a varying number of intersections between tiles.
However, we note that the normalized number of intersections remains small for all the tested tile sizes.

\section{Algorithm with Larger Sample}
\label{sec:appendix-large-sample}
In this section, we modify the algorithm described in \rsec{algorithm} by increasing the size of the sample $\varsamplesystem$ and choosing sphere separators $h$ in a slightly different way.
By doing this, we can relax the assumption on $k$ to $k \leq |\varsystem|/\hat r$ and provide bounds on the number of boundary balls in $\varsystem$.

Define the set of classifiers $\varclassifiers_=$ as $\varclassifiers_= = \{ \vardisks(\varclassifier_=) \mid \varclassifier \in \varcircles \}$.
It follows from \rlem{vc-compose} and \rlem{vc-circle} that the range space $(\vardisks, \varclassifiers_=)$ have constant VC dimension.
Furthermore, we modify the definition of $\varclassifiers_l$ to include $\varclassifiers_=$ as follows:
\begin{align*}
  \varclassifiers_l &= \big\{ \varclassifier_1 \cap \cdots \cap \varclassifier_l \mid \varclassifier_1, \ldots, \varclassifier_l \in (\varclassifiers_\leq \cup \varclassifiers_\geq \cup \varclassifiers_=) \big\}\;.
\end{align*}
Using the same argument as the proof of \rlem{vc-compose-spheres}, we see that the VC-dimension of $(\vardisks, \varclassifiers_l)$ is $\BigO(l \log l)$.

Recall that the algorithm described in \rsec{algorithm:sample} forms a separator tree $\varseptree$ by recursively computing sphere separators $\varclassifier$ on $\varsamplesystem$.
The proof of the algorithm relied on the assumption that $\varsamplesystem$ is sampled such that $\varsamplesystem$ is a $(1/\hat r, \varepsilon)$-relative approximation of $\varsystem$, where $\varepsilon > 0$ is a constant.
This allowed us to argue about the size of the regions of $\varsystem$ within an error of $\BigO(|\varsystem|/\hat r)$.
However, we now wish to argue that the number of boundary balls in a region is $\BigO(\sqrt{k|\varsystem|/\hat r})$, which can be much smaller than $|\varsystem|/\hat r$.
Therefore, we let $p = \big({\sqrt{k|\varsystem|/\hat r}}\big)/{|\varsystem|} = \sqrt{k/(|\varsystem|\hat r)}$.
Assume that $\varsamplesystem$ is a $(p,\varepsilon)$-relative approximation of $\varsystem$ in the range space $(\vardisks, \varclassifiers_l)$.
Furthermore, we relax the assumption on $k$ to $k \leq |\varsystem|/\hat r$.

% We use the above to argue for the size of $\varsystem_i(\varclassifier_=)$ in a node $i$ of the recursion when the sample is sufficiently large.
We modify the recursion of \rsec{algorithm:sample} as follows:
Recall that we divide $\varsamplesystem$ into regions of size at most $c \cdot |\varsamplesystem|/\hat r$.
Let $\varsamplesystem_i \subseteq \varsamplesystem$ and $\varsystem_i \subseteq \varsystem$ denote the balls in a node $i$ of the recursion.
We use \rthm{miller-separator} to obtain a sphere separator $h$ that with probability at least $1/2$ satisfies
\begin{align}
  \label{eqn:appendix-sample-inside} |\varsamplesystem_i(\varclassifier_<)|, |\varsamplesystem_i(\varclassifier_>)| &\leq \frac {10}{12} \cdot |\varsamplesystem_i|\;, \\
  \label{eqn:appendix-full-intersect} |\varsystem_i(\varclassifier_=)| &\leq c_1 \sqrt{k|\varsystem_i|}\;,
\end{align}
where $c_1>0$ is a constant.
Note that this uses expected $\Scan(|\varsamplesystem_i|)$ I/Os.
We use \reqn{appendix-full-intersect} and the assumption that $\varsamplesystem$ is a $(p,\varepsilon)$-relative approximation to obtain the following:
\begin{align*}
  |\varsamplesystem_i(\varclassifier_=)|
    % &\leq (1+\varepsilon)\frac{|\varsamplesystem|}{|\varsystem|} \cdot |\varsystem_i(\varclassifier_=)| \\
    &\leq (1+\varepsilon)\frac{|\varsamplesystem|}{|\varsystem|} \cdot c_1 \sqrt{k|\varsystem_i|} \\
    &\leq (1+\varepsilon)\frac{|\varsamplesystem|}{|\varsystem|} \cdot c_1 \sqrt{(1+\varepsilon)\frac{|\varsystem|}{|\varsamplesystem|} \cdot k|\varsamplesystem_i|} \\
    &= c_2 \sqrt{\frac{|\varsamplesystem|}{|\varsystem|} \cdot k |\varsamplesystem_i|}\;, \label{eqn:appendix-sample-intersect} \numberthis
\end{align*}
where $c_2 > 0$ is a constant.
Similar to before, we sample an expected constant number of separators $\varclassifier$, until \reqn{appendix-sample-inside} and \reqn{appendix-sample-intersect} are satisfied.
Note that the assumption $k \leq |\varsystem|/\hat r$ implies $k \frac{|\varsamplesystem|}{|\varsystem|} \leq \frac{|\varsamplesystem|}{\hat r}$.
Recalling that $|\varsamplesystem_i| \geq a = c\frac{|\varsamplesystem|}{\hat r}$, we rewrite as follows:
\begin{align*}
  |\varsamplesystem_i(\varclassifier_=)|
    &\leq c_2 \sqrt{\frac{|\varsamplesystem|}{\hat r} |\varsamplesystem_i|}
    \leq c_2 \sqrt{\frac{|\varsamplesystem_i|^2}{c}}
    \leq \frac{c_2}{\sqrt c} |\varsamplesystem_i| \;.
    \label{eqn:appendix-sample-bound} \numberthis
\end{align*}
Thus, for sufficiently large $c>0$, \reqn{appendix-sample-inside} and \reqn{appendix-sample-bound} implies $|\varsamplesystem_i(\varclassifier_\leq)| \leq \frac{11}{12} |\varsamplesystem_i|$ and $|\varsamplesystem_i(\varclassifier_\geq)| \leq \frac{11}{12} |\varsamplesystem_i|$.
In other words, the problem size becomes smaller by a constant factor and we can recursively compute separators $\varclassifier$ until $\varsamplesystem$ is divided into regions of size at most $c \cdot |\varsamplesystem|/\hat r$.

Recall that in \rsec{algorithm:sample}, we argued that $|\varsamplesystem_i(\varclassifier_=)| = \BigO\big(\sqrt{k |\varsamplesystem_i|}\big)$.
It follows that the above modification results in the number of intersections of $\varsamplesystem_i$ being smaller by a factor $\sqrt{|\varsamplesystem|/|\varsystem|}$.
By inserting this into the proofs described in \rsec{algorithm:total-boundary} and \rsec{algorithm:region-boundary}, we obtain that $\varsamplesystem$ is divided into $\BigO(\hat r)$ regions of size $\BigO(|\varsamplesystem|/\hat r)$ such that each region has $\BigO \left(\sqrt{\frac{|\varsamplesystem|}{|\varsystem|} \cdot k|\varsamplesystem|/\hat r}\right)$ boundary balls.
Furthermore, since $\varsamplesystem$ is a $(p,\varepsilon)$-relative approximation of $\varsystem$, we see that $\varsystem$ is divided into $\BigO(\hat r)$ regions of size $\BigO(|\varsystem|/\hat r)$ such that each region has $\BigO \big(\sqrt{k|\varsystem|/\hat r}\big)$ boundary balls.
It follows that the total number of boundary balls $\BigO \big(\sqrt{k|\varsystem|\hat r}\big)$.

% It follows that \reqn{appendix-full-intersect} is also holds within a multiplicative constant error, due to the assumption that The number of boundary balls in $\varsystem$ can then be bounded to $\BigO \big(\sqrt{k|\varsystem|\hat r}\big)$ using a similar recursion as described in \rsec{algorithm}.
In the description above, we assumed that that $\varsamplesystem$ is a $(p,\varepsilon)$-relative approximation of $\varsystem$.
However, if this is not the case, it might not be possible to obtain a classifier satisfying~\reqn{appendix-sample-intersect}.
To ensure that the algorithm terminates in this case, we let $m = \BigO(\log \hat r)$ and sample at most $m$ sphere separators in each node of the recursion.
We terminate if no separator satisfying \reqn{appendix-sample-inside} and \reqn{appendix-sample-intersect} is found.
Recall that in the case where $\varsamplesystem$ is $(p,\varepsilon)$-relative approximation, a sampled sphere separator $\varclassifier$ satisfies \reqn{appendix-sample-inside} and \reqn{appendix-sample-intersect} with probability at least $1/2$.
Thus, for a fixed node we fail to obtain a separator $h$ with probability $2^{-m}$ since the randomness in each sphere separator is independent.
Since the number of nodes in the separator tree is $\BigO(\hat r)$, at least one node in the tree will fail with probability at most $\BigO(\hat r) \cdot 2^{-m}$.
Thus,  all nodes succeed with at least constant probability by setting $m=c_4 \cdot \log \hat r$ for sufficiently large $c_4>0$.

We have now shown that, when given a $(p, \varepsilon)$-relative approximation of $\varsystem$, we can compute a separator tree $\varseptree$ that results in an $\hat r$-way separator for $\varsystem$ when applied to $\varsamplesystem$.
Given such a separator tree $\varseptree$, we can compute an $\hat r$-way separator of $\varsystem$ by recursively dividing the balls of $\varsystem$ using $\varseptree$ as described in \rsec{algorithm:complexity}.
Note that this uses $\BigO(\Scan(|\varsystem|))$ I/Os.
We now bound the expected total number of I/Os used when computing $\varseptree$.
Following the same line of the argumentation as in \rsec{algorithm}, the expected number of I/Os used in a level of the recursion is $\BigO(\Scan(|\varsamplesystem|) \log \hat r)$.
Note the additional log-factor that results from the case where $\varsamplesystem$ is not a $(p,\varepsilon)$-relative approximation of $\varsystem$.
Thus, the expected number of I/Os used in all $\BigO(\log \hat r)$ levels of the recursion is $\BigO(\Scan(|\varsamplesystem|) \log^2 \hat r)$.
It follows from \rlem{vc-sample-size}, that we obtain a $(p,\varepsilon)$-relative approximation $\varsamplesystem$ of $\varsystem$ with at least constant probability by sampling $\varsamplesystem$ of size
\begin{align*}
  |\varsamplesystem|
    = \BigO\left(\sqrt{\frac{|\varsystem|\hat r}{k}} \log \hat r \log \log \hat r \log\left(\sqrt{\frac{|\varsystem|\hat r}{k}}\right)\right)
    = \BigO\left(\sqrt{\frac{|\varsystem|\hat r}{k}} \log \hat r \log \log \hat r \log\frac{|\varsystem|}{k}\right) \;.
\end{align*}
Observe that this is significantly larger than the sample size described in \rsec{algorithm}.

Note that we can assume $\hat r \leq |\varsystem|/M$, since this will be sufficient to divide $\varsystem$ into regions of size $\BigO(M)$.
This follows from the observation that regions of size $\BigO(M)$ fit in memory after a constant number of applications of \rthm{miller-separator}.
Thus, they can be further divided using $\BigO(\Scan(|\varsystem|))$ I/Os in total.
We now obtain the following:
\begin{align*}
  \Scan(|\varsamplesystem|) \log^2 \hat r
  &= \BigO\left(\frac{\sqrt{|\varsystem|}}{B} \sqrt{\frac {\hat r}{k}} \log^3 \hat r \log \log \hat r \log\frac{|\varsystem|}{k}\right) \\
  &= \BigO\left(\frac{\sqrt{|\varsystem|}}{B} \sqrt{\frac{|\varsystem|}{Mk}} \log^3 \hat r \log \log \hat r \log\frac{|\varsystem|}{k}\right) \\
  &= \BigO\left(\frac{|\varsystem|}{B} \cdot \frac{\log^3 \hat r \log \log \hat r \log \frac{|\varsystem|}{k}}{\sqrt{Mk}} \right) \;.
\end{align*}
Recall that we wish to bound the expected total number of I/Os to $\BigO(\Scan(|\varsystem|))$ and that $\hat r \leq M/B$.
We now obtain the following lemma:
\begin{lemma}
  Given a $k$-ply neighborhood system $\varsystem$ in the plane and a parameter $\hat r \leq M/B$,
  an $\hat r$-way separator of $\varsystem$ can be computed using expected $\BigO \big(\Scan(|\varsystem|) \big)$ I/Os, assuming $k \leq |\varsystem|/\hat r$ and
  \begin{align*}
    \log^3 \hat r \log \log \hat r \log \frac{|\varsystem|}{k} = \BigO\Big(\sqrt{Mk}\Big) \;.
  \end{align*}
  \label{lem:appendix-base-case-large}
\end{lemma}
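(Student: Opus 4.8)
The plan is to assemble the ingredients already developed in this section rather than to prove anything new from scratch. First I would invoke the reduction established above: provided the sample $\varsamplesystem$ is a $(p,\varepsilon)$-relative approximation of $\varsystem$ in the range space $(\vardisks, \varclassifiers_l)$ with $p = \sqrt{k/(|\varsystem|\hat r)}$, the modified recursion produces a separator tree $\varseptree$ that, when applied to $\varsystem$, yields an $\hat r$-way separator---that is, $\BigO(\hat r)$ regions, each of size $\BigO(|\varsystem|/\hat r)$ and with $\BigO(\sqrt{k|\varsystem|/\hat r})$ boundary balls, for a total of $\BigO(\sqrt{k|\varsystem|\hat r})$ boundary balls. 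Since applying $\varseptree$ to the full system costs only $\BigO(\Scan(|\varsystem|))$ I/Os, as in \rsec{algorithm:complexity}, it remains to bound the expected cost of computing $\varseptree$ and of obtaining a valid sample.

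Next I would track the I/O cost of computing $\varseptree$. Each of the $\BigO(\log \hat r)$ recursion levels costs $\BigO(\Scan(|\varsamplesystem|)\log \hat r)$ in expectation---the extra $\log \hat r$ coming from the $m = \BigO(\log \hat r)$ separator attempts per node that guarantee termination even when $\varsamplesystem$ fails to be a relative approximation---so computing $\varseptree$ costs $\BigO(\Scan(|\varsamplesystem|)\log^2 \hat r)$ in expectation. By \rlem{vc-compose-spheres} the range space $(\vardisks, \varclassifiers_l)$ has VC dimension $\BigO(l \log l)$, so by \rlem{vc-sample-size} a relative $(p,\varepsilon)$-approximation of the size $|\varsamplesystem| = \BigO\big(\sqrt{|\varsystem|\hat r/k}\,\log \hat r \log \log \hat r \log(|\varsystem|/k)\big)$ derived above suffices with at least constant probability; repeating the sampling an expected constant number of times then preserves the expected I/O bound.

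The crux is the final cancellation, which uses the cap $\hat r \leq |\varsystem|/M$. This cap is legitimate because regions of size $\BigO(M)$ fit in memory after $\BigO(1)$ direct applications of \rthm{miller-separator}, so it suffices to consider $\hat r$ up to $|\varsystem|/M$; consequently $\sqrt{\hat r/k} \leq \sqrt{|\varsystem|/(Mk)}$. Substituting the sample size into $\Scan(|\varsamplesystem|)\log^2 \hat r$, the $\sqrt{|\varsystem|}$ factor from scanning combines with $\sqrt{|\varsystem|/(Mk)}$ to give $|\varsystem|/\sqrt{Mk}$, leaving a residual factor of $\big(\log^3 \hat r \log \log \hat r \log(|\varsystem|/k)\big)/\sqrt{Mk}$ multiplying $|\varsystem|/B$. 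The stated assumption forces this residual factor to be $\BigO(1)$, so the whole computation runs in $\BigO(\Scan(|\varsystem|))$ I/Os. I expect no genuinely hard analytic step here, since all probabilistic and recursive guarantees were established earlier in the section; the only real obstacle is the bookkeeping---ensuring the $\log^2 \hat r$ from the recursion and the $\log \hat r \log \log \hat r \log(|\varsystem|/k)$ from the sample size are tracked correctly and jointly absorbed by the single assumption $\log^3 \hat r \log \log \hat r \log(|\varsystem|/k) = \BigO(\sqrt{Mk})$.
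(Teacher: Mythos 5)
Your proposal follows essentially the same route as the paper's own argument: the $(p,\varepsilon)$-relative approximation with $p=\sqrt{k/(|\varsystem|\hat r)}$, the extra $\log\hat r$ factor from the $m=\BigO(\log\hat r)$ separator attempts per node, the sample size from \rlem{vc-sample-size}, the cap $\hat r \leq |\varsystem|/M$, and the final cancellation against the assumption $\log^3\hat r\log\log\hat r\log(|\varsystem|/k)=\BigO(\sqrt{Mk})$ all match the paper's derivation. The bookkeeping is correct, so the proposal is sound.
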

Given a parameter $r>0$, it follows that we can compute an $r$-way separator by recursively applying \rlem{appendix-base-case-large} for $\BigO(\log_{M/B}(r))$ levels.
This proves \rthm{large-sample}.

\end{document}